\newtheorem{lemma}[]{Lemma}
\newtheorem{theorem}[]{Theorem}
\newcommand{\Exp}{{\mathbf{E}}}
\newcommand{\dd}{{\rm d}}
\newcommand{\tr}{{\rm tr}}
\newcommand{\diag}{{\rm diag}}
\newcommand{\RR}{\mathbb{R}}
\newcommand{\xok}{x^{\rm ok}}
\newcommand{\xko}{x^{\rm ko}}
\newcommand{\xh}{\hat{x}}
\newcommand{\xhok}{\xh^{\rm ok}}
\newcommand{\xhko}{\xh^{\rm ko}}
\newcommand{\cov}{{\bf \Sigma}}
\newcommand{\covok}{\cov^{\rm ok}}
\newcommand{\covko}{\cov^{\rm ko}}
\newcommand{\lambdaok}{\lambda^{\rm ok}}
\newcommand{\lambdako}{\lambda^{\rm ko}}
\newcommand{\fok}{f^{\rm ok}}
\newcommand{\fko}{f^{\rm ko}}
\newcommand{\PP}{{\mathbb{S}}}
\newcommand{\OO}{{\mathbb{O}}}
\newcommand{\fenc}{\fxhgx^\ok}
\newcommand{\fokx}{f^{\rm ok}_x}
\newcommand{\fkox}{f^{\rm ko}_x}
\newcommand{\fokxh}{f^{\rm ok}_{\xh}}
\newcommand{\fkoxh}{f^{\rm ko}_{\xh}}
\newcommand{\fxhgx}{f_{\xh|x}}
\newcommand{\fkoxhjx}{\fko_{\xh,x}}
\newcommand{\covokxh}{\hat{\cov}^{\rm ok}_\theta}
\newcommand{\covkoxh}{\hat{\cov}^{\rm ko}_\theta}
\newcommand{\covXth}{\tilde{\cov}_\theta}
\newcommand{\II}{{\mathcal{I}}}
\newcommand{\Unif}[1]{{\mathcal U}\left(#1\right)}
\newcommand{\Gauss}[2]{{\mathcal{G}}\left(#1,#2\right)}
\newcommand{\GPDF}[3]{G_{#2,#3}\left(#1\right)}
\newcommand{\DKL}[2]{\mathcal{D}_{\rm KL}\left(#1\Vert#2\right)}
\newcommand{\Lambdako}{\Lambda^{\rm ko}}
\newcommand{\Uko}{U^{\rm ko}}
\newcommand{\Tth}{T_\theta}
\newcommand{\Sth}{S_\theta}
\newcommand{\nth}{n_\theta}
\newcommand{\uokthj}{u_{\theta,j}}
\newcommand{\jb}{\bar{\jmath}}
\newcommand{\kb}{\bar{k}}
\newcommand{\Loc}{{\mathcal L}}
\newcommand{\AUC}{{\rm AUC}}
\newcommand{\ok}{{\rm ok}}
\newcommand{\ko}{{\rm ko}}
\newcommand{\ellxhok}{\ell_{\xh}^{\ok}}
\title{Anomaly Detection based on Compressed Data: an Information Theoretic Characterization}
\author{Alex~Marchioni,~\IEEEmembership{Student Member,~IEEE,}
        Andriy Enttsel,~
        Mauro~Mangia,~\IEEEmembership{Member,~IEEE,}
        Riccardo~Rovatti,~\IEEEmembership{Fellow,~IEEE,}
        and~Gianluca~Setti,~\IEEEmembership{Fellow,~IEEE}
\thanks{This work has been submitted to the IEEE for possible publication. Copyright may be transferred without notice, after which this version may no longer be accessible.}
\thanks{A. Marchioni, A. Enttsel, M. Mangia and R. Rovatti are with the Department of Electrical, Electronic, and Information Engineering, University of Bologna, 40136 Bologna, Italy, and also with the Advanced Research Center on Electronic Systems, University of Bologna, 40125 Bologna, Italy (e-mail: alex.marchioni@unibo.it, andriy.enttsel@unibo.it, mauro.mangia@unibo.it, riccardo.rovatti@unibo.it).}%
\thanks{R. Rovatti is also with the Alma Mater Research Institute for Human Centered AI University of Bologna, 40015 Bologna, Italy}
\thanks{G. Setti is with the Department of Electronics and Telecommunications, Politecnico di Torino, 10129 Torino, Italy, and also with the Advanced Research Center on Electronic Systems (ARCES), University of Bologna, 40125 Bologna, Italy (e-mail: gianluca.setti@polito.it).}%
}
\begin{document}

\maketitle
\begin{abstract}
We analyze the effect of lossy compression in the processing of sensor signals that must be used to detect anomalous events in the system under observation.
The intuitive relationship between the quality loss at higher compression and the possibility of distinguishing anomalous behaviours from normal ones is formalized in terms of information-theoretic quantities.
Some analytic derivations are made both in a Gaussian framework and in the asymptotic case for what concerns the extent of signals considered.

Analytical conclusions are matched with the performance of practical detectors in a simple case allowing the assessment of different compression/detector configurations.
\end{abstract}

\begin{IEEEkeywords}
Outlier detection, lossy compression, rate-distortion curve
Internet of Things, Edge computing.
\end{IEEEkeywords}

\section{Introduction}

A typical scenario for nowadays massive acquisition systems can be modelled as a large number of sensing units, each transforming some physical unknown quantity into samples of random processes that are then transmitted over a network.
To reduce transmission bitrate, signals are often compressed by a lossy mechanism that is theoretically capable of preserving useful information. Before reaching some cloud facility in which they will be ultimately stored or processed, the corresponding bitstreams may traverse several levels of hierarchical aggregation and intermediate devices that are often indicated as the {\em edge} of the cloud \cite{Shi_JIOT2016}.
For latency or privacy reasons, some computational tasks may benefit from their deployment at the edge. One of those tasks is the detection of anomalies/novelties.

This is especially true when dealing, for example, with networks that sensorize plants or structures subject to monitoring as depicted in Fig.~\ref{fig:overall}. The aggregated sensor readings may be processed in the cloud for off-line monitoring relying on long-term historical trends, while the outputs of subsets of sensors may be processed at the edge to give low-latency feedback on possible critical events that require immediate intervention.

Usually, compression schemes applied to sensor data are asymmetric and entail a lightweight encoding performed on very-low complexity devices paired with a possibly expensive decoding stage running on the cloud. In these conditions, it is sensible that anomaly detectors work on compressed data and not on the recovered signal.

Yet, lossy compression bases its effectiveness on neglecting some of the signal details. This translates into a distortion between the original and the recovered signal but also in a loss of details that, in principle, could have been used to tell normal behaviours from anomalous ones.

In general, acquisition systems must obey a distortion constraint so that they are designed to best address the trade-off between compression and distortion. However, such a trade-off goes in parallel to the one between distortion and the ability to determine if the signal is normal or anomalous. Here, we analyze the latter with the same information theoretic machinery used in the well-known rate-distortion analysis and implicitly show that the two trade-offs are different.

\begin{figure}
    \centering
    \includegraphics[width=\columnwidth]{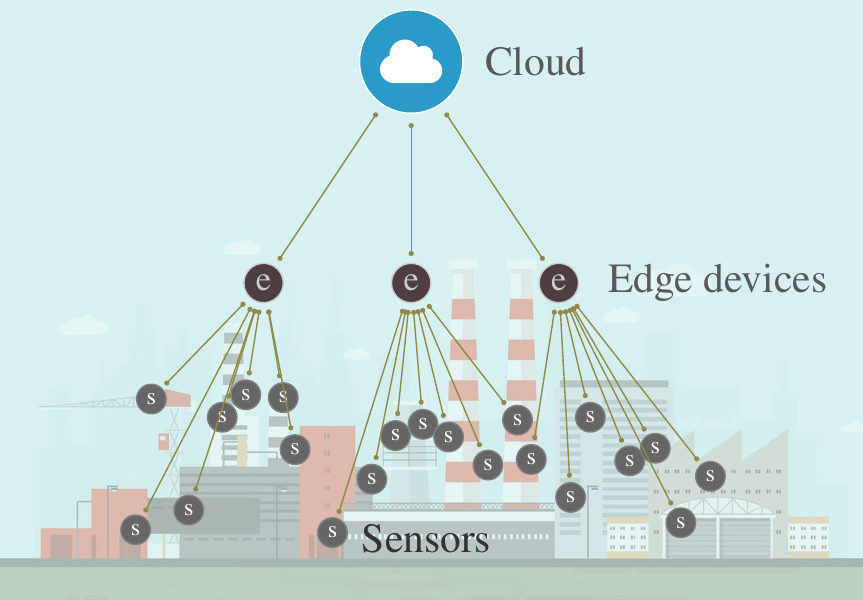}
    \caption{\label{fig:overall}A sensorized plant whose acquisitions are aggregated at the edge before being sent to the cloud.}
\end{figure}

How compression affects distinguishability has been investigated in the literature.  In \cite{Ahlswede_TIT1986} the problem of hypothesis testing is discussed for a single source under a rate constraint. Such a basis has been extended to information-theoretic problems of statistical inference in the case of multiterminal data compression in \cite{Han_TIT1998}. These works address the inference problem with no constraint on distortion since compressed data is not required to be recoverable.
On the contrary, we assume that compression is designed to guarantee the quality of service needed by the processing tasks that receive the reconstructed data.

In a sense, the framework we identify is somehow similar to the information-bottleneck scheme \cite{Tishby_Allerton1999, Slonim_NIPS1999}. In that scheme, distortion is replaced by a very general criterion which identifies the features that should be preserved when compressing with the information that the original signal contains about a second (suitably introduced) signal. Yet, our discussion takes a different direction as, when dealing with anomaly/novelty/outlier, we may completely ignore the statistics of the anomaly and, even if we have priors on that, we need to be able to treat also cases in which the mutual information between normal and anomalous signals is null.

This information-bottleneck principle has also been employed for unsupervised tasks. To tackle one-class classification, in \cite{Crammer_ICML2004, Crammer_ICML2008} an optimization problem is considered in terms of rate-distortion trade-off and solve it by applying the information-bottleneck principle. However, this trade-off is used for anomaly detection with no consideration about distortion.

Another way to relate unsupervised anomaly detection to data compression is described in \cite{Bohm_KDD2009}. Here, the level of abnormality of a data point with respect to the entire data set, called {\em coding cost}, is given by its ability to be efficiently compressed in a Huffman coding fashion.

The same reason that differentiates our work from the information-bottleneck principle makes the analysis we propose different from other modifications of classical rate-distortion theory that substitute energy-based distortion with perceptive criteria \cite{Blau_CVPR2018, Blau_ICML2019}.

Though not overlapping with the problem we address, it is also worthwhile to mention \cite{Rodrigues_ICASSP2017, Shlezinger_TSP2019}, in which it is assumed that the original signal is characterized by some parameters (e.g., their mean) and study how the estimation of such parameters is affected by lossy compression.

Note also that other applications exist in which rate and distortion are paired with additional merit figures taking into account relevant features of the system. As an example \cite{Foo_ICASSP2007} adds computational effort considerations to the analysis of rate-distortion of wavelet-based video coding.

Finally, even without emphasis on compression, the relationship between the analysis of suitably defined subcomponents of a signal to detect possible outlier behaviours is a classic theme that is still under investigation \cite{Menon_TSP2019, Kong_TSP2019, Rahmani_ICASSP2021}.

In this paper, we propose an analysis of the performance of a generic detector working on a signal distorted by the compression mechanism that minimizes the rate given a constraint on distortion. To characterize the detector, we define two information-theoretic measures of distinguishability to model anomaly-agnostic and anomaly-aware scenarios. Specializing our analysis to the case of Gaussian sources, we show that: \textit{i}) the distinguishability metrics in case of white anomalies are representative of the average performance of the detector evaluated on many other different anomalies; \textit{ii}) as the signal dimension increases, any possible anomaly tends to resemble the white anomaly; \textit{ii}) in case of an anomaly-agnostic detector, there exists at least one critical distortion level that makes the white anomaly undetectable.

The paper is organised as follows. Section~\ref{sec:ratedistortion} reviews the classical rate-distortion theory in a general setting first and then in the specific case of Gaussian sources with considerations on the optimal encoding mapping expression.
Section~\ref{sec:distinguishability} provides the definition of the normal and anomalous signals together with the formulation of the distinguishability measures for both anomaly-agnostic and anomaly-aware scenarios. 
Section~\ref{sec:averagelarge} focuses on the distinguishability in the average case, with an emphasis on the asymptotic characterization of high-dimensional signals.
Section \ref{sec:numerical} reports some numerical evidence analysing the behaviour of some suitably simplified anomaly detection strategies with respect to ideal and suboptimal compression strategies. Theoretical curves anticipate many aspects of practical performance trends and show that compression that optimizes the rate-distortion trade-off is not necessarily addressing at best the compromise with distinguishability. The conclusion is finally drawn. Proofs of the theorems and lemmas stated in the discussion are reported in the Appendix.

\section{Rate vs. distortion}
\label{sec:ratedistortion}

\begin{figure}
    \centering
    \includegraphics[width=\columnwidth]{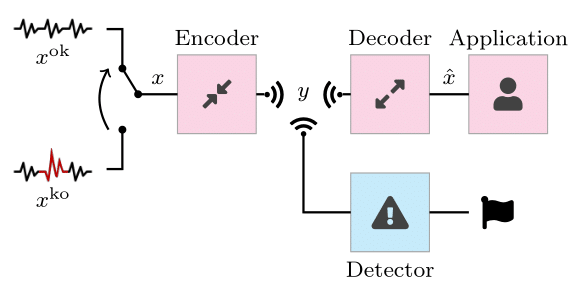}
    \caption{ The signal chain is tuned on the normal signal $\xok$ to best address the rate-distortion trade-off, guaranteeing a certain quality of service to a given application.
    An anomalous signal $\xko$ may occur and a detector working on the compressed signal $y$ should be able to detect it.
    }
    \label{fig:sigchain}
\end{figure}

We consider the context in which a system has the main task of transferring the information content of a signal source $x$ to a receiver through a communication channel that has a constraint on rate. At any time instant $t$, an instance $x[t]$ is passed to an encoding stage producing a compressed version $y[t]$ that may then be decompressed into $\xh[t]\in \widehat{\RR^n}$, where $\widehat{\RR^n} \subset \RR^n$.

The constraint on rate is such that it implies a lossy compression mechanism. The encoding stage is therefore not injective and introduces some distortion. The encoder is tuned on the source $x$, which is modelled as a independent discrete-time, $n$-dimensional stochastic processes.

The trade-off between rate and distortion is addressed in the rate-distortion theory \cite[Chapter 13]{Cover_1991}. Distortion may be defined as
\begin{equation}
\label{eq:distortiondef}
D = \Exp\left[\left\|x[t]-\xh[t]\right\|^2\right]
\end{equation}
where $\Exp[\cdot]$ stands for expectation, and the minimal achievable rate $\rho$ can be expressed as a function of the maximal accepted distortion $\delta$  as follows \cite[Theorem 13.2.1]{Cover_1991}
\begin{equation}
\label{eq:rate-distortion}
\rho(\delta)=\inf_{\fxhgx} I\left(\xh;x\right) \quad\text{s.t. $D\le \delta$}
\end{equation} 
where $I\left(\xh;x\right)$ is the mutual information between $\xh$ and $x$ \cite[Chapter 8]{Cover_1991}, and $\fxhgx$ is a conditional probability density function (PDF) modeling the possibly stochastic mapping characterizing the encoder-decoder pair. Although \cite[Theorem 13.2.1]{Cover_1991} defines the rate-distortion function in the discrete case, it can also be proved for well-behaved continuous sources \cite[Chapter 13]{Cover_1991} as considered in this work.

If the source is memoryless (thus allowing us to drop the time index $t$) and generates vectors of independent and zero-mean Gaussian variables, i.e., when $x\sim\Gauss{0}{\cov}$ where $\cov$ is a diagonal covariance matrix such that $\cov = \diag(\lambda_0, \dots, \lambda_{n-1})$ with $\lambda_0 \ge \lambda_1 \ge \dots \ge \lambda_{n-1} \ge 0$, then the solution of \eqref{eq:rate-distortion} is

\begin{align}
\label{eq:GRD-rate}
\rho 
&= \frac{1}{2}\sum_{j=0}^{n-1}\log_2\frac{\lambda_j}{\min\left\{\theta,\lambda_j\right\}}
=  - \frac{1}{2}\sum_{j=0}^{n-1}\log_2 \tau_j \\
\label{eq:GRD-delta}
\delta 
&= \sum_{j=0}^{n-1} \min\left\{\theta,\lambda_j\right\}
= \sum_{j=0}^{n-1} \lambda_j \tau_j
\end{align}
where $\theta\in[0,\lambda_0]$ is the so called reverse water-filling parameter \cite[Theorem 13.3.3]{Cover_1991}, and $\tau_j = \min\left\{1,\theta/\lambda_j\right\}$ accounts for the fraction of energy cancelled by distortion along the $j$-th component. 

The coding theorems behind such a classical development imply that the optimal trade-off \eqref{eq:rate-distortion} between rate and distortion is asymptotically obtained by simultaneously encoding an increasing number of subsequent source symbols into a single block that can be then reverted to a sequence of distorted symbols. Hence, in principle, the intermediate symbols $y$ feeding the anomaly detector in Fig.~\ref{fig:sigchain} cause it to work simultaneously on multiple instances of the signals.

Though this is not incoherent with what happens in real detectors that observe more than one suspect instance before declaring an anomaly, we here instead consider a {\em per-use} analysis which is typical and scales the key merit figures (rate, distortion and, in our case, distinguishability -- see Section \ref{sec:distinguishability}) by the number of source symbols aggregated to obtain them.

This allows us to pursue the classical approach defining a test channel whose single use has the same expected behaviour as the average of infinite uses and, in the case of Gaussian sources, has a particularly simple expression that we derive and exploit to imagine that a source instance $x$ is encoded into a compressed symbol $y$ from which $\xh$ can be recovered \cite[Chapter 13]{Cover_1991}, \cite{Kolmogorov_TIT1956}.

In the same Gaussian framework, it is also possible to derive the PDF of the distorted signal $\xh$ and the conditional PDF $\fxhgx$ that stochastically maps an input $x\sim\Gauss{0}{\cov}$ to $\xh$.
If we accept to identify a zero-variance Gaussian with a Dirac's delta and define $\Sth = I_n - \Tth$ with $\Tth=\diag(\tau_0, \dots, \tau_{n-1})$ to account for the fraction of energy that survives distortion along each component, then we can derive the following Lemma whose proof is in the Appendix.

\begin{lemma}
\label{lem:Gfenc}
If $x\sim\Gauss{0}{\cov}$ is a memory-less source and we constraint the distortion $D \le \delta$, the optimally distorted signal has distribution
\begin{equation}
\label{eq:Gfokxh}
\xh \sim \Gauss{0}{\cov \Sth}
\end{equation}
and the optimal encoding mapping is
\begin{equation}
\label{eq:Gfenc}
\fxhgx(\alpha,\beta) = \GPDF{\alpha}{\beta\Sth}{\cov\Sth\Tth}
\end{equation}
where $\GPDF{\cdot}{m}{K}$ represents the PDF of a Gaussian variable with mean $m$ and covariance matrix $K$.
\end{lemma}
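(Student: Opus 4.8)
The plan is to exhibit the jointly Gaussian law of the pair $(x,\xh)$ that attains the rate--distortion bound \eqref{eq:GRD-rate}--\eqref{eq:GRD-delta}, written in the convenient ``backward'' form in which the source is recovered from the reconstruction by adding independent Gaussian noise, and then to convert it to the ``forward'' description $\fxhgx$ by ordinary Gaussian conditioning. Since the source is memoryless and $\cov$ is diagonal, the optimal test channel factorizes over coordinates, so I would work one coordinate at a time. For the $j$-th coordinate set $\xh_j\sim\Gauss{0}{\lambda_j-\min\{\theta,\lambda_j\}}=\Gauss{0}{\lambda_j(1-\tau_j)}$ and, independently, $z_j\sim\Gauss{0}{\min\{\theta,\lambda_j\}}=\Gauss{0}{\lambda_j\tau_j}$, and put $x_j=\xh_j+z_j$; when $\lambda_j\le\theta$ this degenerates to $\xh_j\equiv 0$, $x_j=z_j$, in agreement with the zero-variance-Gaussian (Dirac) convention adopted just before the statement.

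First I would verify that this is indeed the optimizer. By construction $x_j\sim\Gauss{0}{\lambda_j}$, so the source marginal is reproduced; the per-coordinate distortion is $\Exp[(x_j-\xh_j)^2]=\Exp[z_j^2]=\min\{\theta,\lambda_j\}$, which summed over $j$ is exactly $\delta$ of \eqref{eq:GRD-delta}; and, computing $I(\xh_j;x_j)=h(x_j)-h(x_j\mid\xh_j)=\tfrac12\log_2\tfrac{\lambda_j}{\min\{\theta,\lambda_j\}}$ for the active coordinates and $I(\xh_j;x_j)=0$ for the degenerate ones, $I(\xh;x)=\sum_j I(\xh_j;x_j)$ equals $\rho$ of \eqref{eq:GRD-rate}. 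Hence this joint law solves \eqref{eq:rate-distortion} in the Gaussian case (this is just the reverse water-filling theorem \cite[Theorem 13.3.3]{Cover_1991}, reused here in the per-use rather than per-block sense), so its $\xh$-marginal and its conditional $\fxhgx$ are the objects the Lemma describes.

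Stacking coordinates then gives \eqref{eq:Gfokxh} directly: $\xh$ is zero-mean Gaussian with diagonal covariance $\diag\bigl(\lambda_0(1-\tau_0),\dots,\lambda_{n-1}(1-\tau_{n-1})\bigr)=\cov(I_n-\Tth)=\cov\Sth$, with the Dirac entries exactly where $\lambda_j\le\theta$. For \eqref{eq:Gfenc} I would apply the standard conditioning formula to the jointly Gaussian pair $(\xh_j,x_j)$: since $\mathrm{Cov}(\xh_j,x_j)=\mathrm{Var}(\xh_j)=\lambda_j(1-\tau_j)$ and $\mathrm{Var}(x_j)=\lambda_j$, the law of $\xh_j$ given $x_j=\beta_j$ is Gaussian with mean $(1-\tau_j)\beta_j$ and variance $\lambda_j(1-\tau_j)-\lambda_j(1-\tau_j)^2=\lambda_j(1-\tau_j)\tau_j$. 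In matrix form the mean is $\beta\Sth$ and the covariance is $\cov\Sth\Tth$, i.e. $\fxhgx(\alpha,\beta)=\GPDF{\alpha}{\beta\Sth}{\cov\Sth\Tth}$; the degenerate coordinates contribute a Dirac at $0=(\beta\Sth)_j$ with variance $0=(\cov\Sth\Tth)_{jj}$, so the same expression covers them.

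The two conditioning computations are routine; the points needing care are (a) justifying that the backward additive-Gaussian-noise channel is the actual minimizer of \eqref{eq:rate-distortion} in the Gaussian case, which is exactly the content of \cite[Theorem 13.3.3]{Cover_1991}, and (b) the coordinates with $\lambda_j\le\theta$, where both the marginal and the conditional collapse and one must lean on the zero-variance-Gaussian convention fixed before the statement so that the compact matrix formulas $\cov\Sth$, $\beta\Sth$ and $\cov\Sth\Tth$ remain valid verbatim. I expect (b) to be the only genuine subtlety.
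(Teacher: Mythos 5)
Your proposal is correct and follows essentially the same route as the paper's own proof: the per-coordinate backward additive-Gaussian test channel of reverse water-filling (the paper writes it via the joint covariance of $(\xh_j,x_j,\Delta_j)$, citing Kolmogorov), from which the marginal $\Gauss{0}{\cov\Sth}$ and the conditional with mean $\beta\Sth$ and covariance $\cov\Sth\Tth$ follow by standard Gaussian conditioning, with the zero-variance/Dirac convention handling the components with $\lambda_j\le\theta$. Your explicit check that the construction attains \eqref{eq:GRD-rate}--\eqref{eq:GRD-delta} is a slightly more self-contained way of what the paper delegates to the citation, but the substance is the same.
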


Although in general it is not explicitly reported, the expression of $\fxhgx$ is important when the compression mechanism is employed to encode a signal different from the one for which it was designed. This is the case of an unexpected anomalous source that replaces the normal signal.

\section{Anomalies and their detectability}
\label{sec:distinguishability}

In the path from encoder to decoder, the compressed signal $y$ may be intercepted for some local processing. The local processing we focus on is the task of distinguishing whether the transmitted signal differs from what is usually observed, i.e., anomaly detection.

To include this aspect in our model, each observable instance $x[t]$ has to be considered as a realization of two different sources: one modeling the normal behaviour $\xok$ and one representing an anomaly $\xko$.
These two sources are modelled as two discrete-time, stationary, $n$-dimensional stochastic processes each generating independent and identically distributed (i.i.d.) vectors $\xok\in\RR^n$ and $\xko\in\RR^n$ with different PDFs $\fok:\RR^n\rightarrow\RR^+$ and $\fko:\RR^n\rightarrow\RR^+$. As a result, at any time $t$ the observable process is either $x[t]=\xok[t]$ or $x[t]=\xko[t]$ (visually represented in Fig.~\ref{fig:sigchain}). Since, we assume the generated vectors as i.i.d. from now on we may drop time indication.

More specifically, according to the framework characterizing Lemma~\ref{lem:Gfenc}, we here consider the case in which both sources are Gaussian. In particular we focus on signals with zero-mean and covariance matrices $\covok,\covko\in\RR^{n\times n}$. In general, $\covok\neq\covko$, but we will assume $\tr(\covok)=\tr(\covko)=n$, where $\tr(\cdot)$ stands for matrix trace, meaning that, on the average, each sample in the vector contributes with a unit energy.
With the assumption of signals to be zero-mean and of equal energy, we can focus our analysis on one of the possible effects of anomalies, i.e., the distribution of energy over the signal subspace. 

Moreover, with no loss of generality, we assume $\covok={\rm diag}\left(\lambdaok_0,\dots,\lambdaok_{n-1}\right)$ with $\lambdaok_0\ge\lambdaok_1\ge\dots\ge\lambdaok_{n-1}\ge 0$.

The signal $x$ is encoded with a compression mechanism tailored for the typical condition in which $x=\xok$. The objective consists in guaranteeing a proper quality of service $D = \Exp[\|\xok[t]-\xhok[t]\|^2] \le \delta$ to the final user. Hence, in this specific case, the rate-distortion function in \eqref{eq:GRD-rate} considers $\cov=\covok$ and $\fxhgx = \fenc$.

Simultaneously, a detector observes $y$ for anomaly detection. Since we assume the decoding stage to be injective, $y$ brings the same information of $\xh$ so that, in abstract terms, processing $y$ is equivalent to working on $\xh$. As a result, the detector works on the difference between the two marginal distributions $\fokxh$ and $\fkoxh$ that can be computed as follows
\begin{align}
\fokxh(\alpha) &= \int_{\RR^n} \fenc\left(\alpha,\beta\right) \fokx(\beta) \dd\beta \label{eq:fokxh}\\
\fkoxh(\alpha) &= \int_{\RR^n} \fenc\left(\alpha,\beta\right) \fkox(\beta) \dd\beta \label{eq:fkoxh}
\end{align}

From \eqref{eq:fkoxh}, it is evident that the compression mechanism $\fenc$ that optimally addresses the rate-distortion trade-off for the normal source is used also on the anomalous instances. Under the i.i.d. Gaussian assumption, \eqref{eq:fokxh} reduces to \eqref{eq:Gfokxh} with $\cov = \covok$, while the PDF of $\xhko$ is given by the following Lemma, whose proof is in the appendix.

\begin{lemma}
\label{lem:Gfkoxh}
If an anomalous source $\xko\sim\Gauss{0}{\covko}$ is encoded with the compression scheme $\fenc$ of Lemma~\ref{lem:Gfenc}, then

\begin{equation}
\label{eq:Gfkoxh}
\xhko\sim\Gauss{0}{\Sth\covko\Sth+\theta\Sth}
\end{equation}
\end{lemma}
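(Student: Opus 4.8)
The approach I would take is to realize the optimal encoder of Lemma~\ref{lem:Gfenc} as an explicit additive Gaussian test channel and then push the anomalous source through it. By \eqref{eq:Gfenc} with $\cov=\covok$, the conditional law $\fenc(\cdot,\beta)$ is Gaussian with mean $\Sth\beta$ and covariance $\covok\Sth\Tth$; equivalently, drawing $\xh$ from this law is the same as setting $\xh=\Sth x+z$ with $z\sim\Gauss{0}{\covok\Sth\Tth}$ drawn independently of $x$. Taking $x=\xko\sim\Gauss{0}{\covko}$ then gives $\xhko=\Sth\xko+z$, a sum of two independent zero-mean Gaussian vectors, so $\xhko$ is zero-mean Gaussian with covariance $\Sth\covko\Sth+\covok\Sth\Tth$. (Alternatively, one can evaluate the convolution integral \eqref{eq:fkoxh}, which equals $\int_{\RR^n}\GPDF{\alpha}{\Sth\beta}{\covok\Sth\Tth}\GPDF{\beta}{0}{\covko}\,\dd\beta$, and read off the same mean and covariance from the standard Gaussian marginalization formula; the test-channel description just shortens the bookkeeping.)

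The only substantive step is to simplify $\covok\Sth\Tth$ to $\theta\Sth$. Since $\covok$, $\Sth$ and $\Tth$ are all diagonal, it suffices to check componentwise that $\lambdaok_j(1-\tau_j)\tau_j=\theta(1-\tau_j)$, where $\tau_j=\min\{1,\theta/\lambdaok_j\}$: if $\lambdaok_j\le\theta$ then $\tau_j=1$ and both sides vanish, while if $\lambdaok_j>\theta$ then $\tau_j=\theta/\lambdaok_j$, so $\lambdaok_j\tau_j=\theta$ and the identity is immediate. Substituting $\covok\Sth\Tth=\theta\Sth$ into the covariance found above yields $\xhko\sim\Gauss{0}{\Sth\covko\Sth+\theta\Sth}$, i.e.\ \eqref{eq:Gfkoxh}.

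I do not expect a real obstacle: once Lemma~\ref{lem:Gfenc} is available, everything reduces to algebra with diagonal matrices together with the elementary fact that $\tau_j<1$ forces $\lambdaok_j\tau_j=\theta$. The one point I would state carefully is that the test channel here is the one \emph{tuned on the normal source}, so $\Sth$, $\Tth$ and the noise covariance $\covok\Sth\Tth=\theta\Sth$ are frozen at the values dictated by $\covok$ and $\theta$, whereas the covariance entering the quadratic term $\Sth\covko\Sth$ is that of the anomaly. As a sanity check, when $\covko=\covok$ this $\theta\Sth$ recombines with $\Sth\covok\Sth$ to give back the matched expression $\covok\Sth$ of \eqref{eq:Gfokxh}.
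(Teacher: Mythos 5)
Your proof is correct, and it takes a genuinely different route from the paper. You realize the conditional law $\fenc(\cdot,\beta)=\GPDF{\cdot}{\Sth\beta}{\covok\Sth\Tth}$ as the forward test channel $\xh=\Sth x+z$ with $z\sim\Gauss{0}{\covok\Sth\Tth}$ independent of $x$, so that feeding in $\xko$ gives a sum of independent zero-mean Gaussians with covariance $\Sth\covko\Sth+\covok\Sth\Tth$, and you then reduce $\covok\Sth\Tth=\theta\Sth$ by the componentwise identity $\lambdaok_j(1-\tau_j)\tau_j=\theta(1-\tau_j)$. The paper instead evaluates the convolution integral \eqref{eq:fkoxh} explicitly: it first assumes the low-distortion regime $\theta<\lambdaok_{n-1}$ (so that $\Tth=\theta(\covok)^{-1}$ and the relevant matrices are invertible), completes the square inside the Gaussian integral to obtain the covariance $\bigl[(\theta I_n)^{-1}-(\covok)^{-1}+(\covko)^{-1}\bigr]\covko\covok\Sth\Tth$, rearranges it to $\Sth\covko\Sth+\theta\Sth$, and then extends the result to all $\theta$ by a continuity argument in which components with $\lambdaok_j\le\theta$ collapse to Dirac deltas. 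Your argument buys brevity and uniformity: the linear-Gaussian channel representation avoids the integral bookkeeping entirely, and because your identity $\covok\Sth\Tth=\theta\Sth$ also covers the degenerate case $\tau_j=1$ (both sides vanish), no separate limiting step is needed — degenerate components are just zero-variance Gaussians, consistent with the convention already adopted in Lemma~\ref{lem:Gfenc}. What the paper's computation buys is a self-contained derivation at the density level that exhibits explicitly how the inverse-covariance terms combine and how the fully distorted components degenerate. Your closing sanity check ($\covko=\covok$ recovering $\covok\Sth$) matches the corner case discussed after the lemma, and your remark that $\Sth$, $\Tth$, and the noise covariance are frozen at the values dictated by $\covok$ while only the quadratic term sees $\covko$ is exactly the right point to emphasize.
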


Such a result has two noteworthy corner cases.
\begin{itemize}
    \item If $\theta\rightarrow 0^+$ there is no distortion. In fact, since $\Sth=I_n$, Lemma~\ref{lem:Gfkoxh} gives $\xhko\sim\xko$.

    \item If $\xko\sim\xok$ there is no anomaly, $\covok=\covko$, and
        \begin{equation*}
        \Sth\covko\Sth+\theta\Sth=[\Sth+\theta(\covok)^{-1}]\covok\Sth=\covok\Sth
        \end{equation*}
        where the last equality holds since $\Sth=\max\left\{0,I_n-\theta(\covok)^{-1}\right\}$, the possible disagreements between $\Sth+\theta(\covok)^{-1}$ and $I_n$ correspond to components multiplied by zero by the last $\Sth$ factor. Hence, Lemma~\ref{lem:Gfkoxh} can be compared with Lemma~\ref{lem:Gfenc} to confirm that $\xhko\sim\xhok$.
\end{itemize}

Lemma~\ref{lem:Gfenc} and Lemma~\ref{lem:Gfkoxh} imply that when the normal and anomalous signals are Gaussian before compression, performance of anomaly detectors depends on how much we are capable of distinguishing between the two distributions in \eqref{eq:Gfokxh} and \eqref{eq:Gfkoxh}.
We quantify the difference between them with two kinds of information-theoretic measures, which model two distinct scenarios, one in which the detector knows both $\fokxh$ and $\fkoxh$ and one in which it knows only $\fokxh$.

To proceed further it is convenient to define the functional
\begin{equation}
L(x';x'') = -\int_{\RR^n} f_{x'}(\alpha) \log_2\left[f_{x''}(\alpha)\right] \dd\alpha    
\end{equation}
that is the average coding rate, measured in bits per symbol, of a source characterized by the PDF $f_{x'}$ with a code optimized for a source with PDF $f_{x''}$, so that $L(x;x)$ is equal to the differential entropy of $x$ \cite[Chapter 8]{Cover_1991}.
As an alternative statistical point of view, if $f_{x'}$ is the PDF of the symbols generated by a source $x'$, $f_{x''}$ is the PDF of the symbols generated by a source $x''$ and $\ell_x(\alpha)=-\log_2 f_{x}(\alpha)$ is the negative log-likelihood that the symbol $\alpha$ has been generated by the sources $x$, then
$L(x';x'')=\Exp\left[\ell_{x''}(\alpha)|x'\right]$, i.e., the average negative likelihood that an instance is generated by the source $x''$ when it is actually generated by the source $x'$.

Within the Gaussian assumption, we can derive the analytical expression for $L$ in the following Lemma whose derivation is in the appendix.

\begin{lemma}
	\label{lem:Gell}
	If $x'\sim\Gauss{0}{\cov'}$ and $x''\sim\Gauss{0}{\cov''}$ then
	
	\begin{equation}
	\label{eq:Gell}
	L(x';x'')=\frac{1}{2\ln 2}\left\{
	\ln\left[(2\pi)^n\left|\cov''\right|\right]
	+\tr\left[(\cov'')^{-1}\cov'\right]
	\right\}
	\end{equation}
	
\noindent where $|\cdot|$ indicates the determinant of its matrix argument.
\end{lemma}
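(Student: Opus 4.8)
The plan is to evaluate the defining integral directly, recognizing it as the expectation of the negative log-likelihood $\ell_{x''}$ under the law of $x'$, exactly as foreshadowed by the identity $L(x';x'')=\Exp\left[\ell_{x''}(\alpha)\mid x'\right]$ stated just before the lemma. First I would write the density of the second source explicitly: since $x''\sim\Gauss{0}{\cov''}$, its PDF is $f_{x''}(\alpha) = \left[(2\pi)^n|\cov''|\right]^{-1/2}\exp\left(-\tfrac12\alpha^\top(\cov'')^{-1}\alpha\right)$, hence $\ell_{x''}(\alpha) = -\log_2 f_{x''}(\alpha) = \tfrac{1}{2\ln 2}\left[\ln\left((2\pi)^n|\cov''|\right) + \alpha^\top(\cov'')^{-1}\alpha\right]$. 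Substituting this into $L(x';x'') = \int_{\RR^n} f_{x'}(\alpha)\,\ell_{x''}(\alpha)\,\dd\alpha$ splits the computation into two pieces.

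The first piece, the log-determinant term, does not depend on $\alpha$ and therefore factors out of the integral, contributing $\tfrac{1}{2\ln 2}\ln\left((2\pi)^n|\cov''|\right)$ because $f_{x'}$ integrates to one. The second piece is the expected quadratic form $\Exp\left[\alpha^\top(\cov'')^{-1}\alpha\mid x'\right]$. Here I would use the standard trace identity $\alpha^\top(\cov'')^{-1}\alpha = \tr\left[(\cov'')^{-1}\alpha\alpha^\top\right]$, move the expectation inside the trace by linearity, and use that $\Exp\left[\alpha\alpha^\top\mid x'\right] = \cov'$ since $x'$ is zero-mean with covariance $\cov'$. This yields $\tr\left[(\cov'')^{-1}\cov'\right]$, and adding the two pieces gives exactly \eqref{eq:Gell}.

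There is essentially no hard step here; the only point deserving a word of care is the standing convention that zero-variance components are identified with Dirac deltas, so that $\cov''$ may be singular and $(\cov'')^{-1}$ ill-defined. In that situation I would either restrict the statement to nonsingular $\cov''$ — the only regime in which $L(x';x'')$ is finite and the formula is literally meaningful — or read \eqref{eq:Gell} as the limit obtained by letting the vanishing eigenvalues of $\cov''$ tend to $0^+$, noting that $L(x';x'')\to+\infty$ precisely when $\cov'$ places mass outside the range of $\cov''$, consistently with the interpretation of $L$ as a coding rate.
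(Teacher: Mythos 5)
Your proof is correct and follows essentially the same route as the paper: split $\ell_{x''}$ into the constant log-determinant part (which integrates out since $f_{x'}$ is a density) and the expected quadratic form, the latter equal to $\tr\left[(\cov'')^{-1}\cov'\right]$. The only cosmetic difference is that you derive the quadratic-form expectation via the trace identity while the paper simply cites a standard formula, and your closing remark on singular $\cov''$ is a sensible clarification rather than a divergence.
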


\subsection{ Distinguishability in anomaly-agnostic detection}

When $\fkoxh$ is unknown and only $\fokxh$ is given, we can only consider the average coding rates referring to code optimized for $\xhok$, i.e., $L(\xhko;\xhok)$ and $L(\xhok;\xhok)$. One may quantify the difference between the normal behaviour and an anomalous one by measuring the increase or decrease in the average coding rate with respect to the expected case $L(\xhok;\xhok)$ as follows:
\begin{align}
\zeta &= L\left(\xhko;\xhok\right) - L\left(\xhok;\xhok\right) \label{eq:z-ell}\\
&= \int_{\RR^n} \left[\fokxh(\alpha) - \fkoxh(\alpha)\right] \log_2\fokxh(\alpha) \dd\alpha \label{eq:z-int}
\end{align}

Since there may be anomalies whose encoding yields a lower rate with respect to normal signals, $\zeta$ is not always positive. As a result, a distinguishability measure is given by considering its magnitude, i.e., $|\zeta|$.

From a statistical perspective, $\zeta$ corresponds to the difference in the expectations of the negative log-likelihood that $\alpha$ is normal given either $\alpha$ is actually an instance of $\xhok$ or $\xhko$
\begin{equation*}
    \zeta = \Exp\left[ \ellxhok(\alpha)|\xhko \right] - \Exp \left[ \ellxhok(\alpha)|\xhok \right]
\end{equation*}

The use of the quantity $\ellxhok(\alpha)=-\log_2 \fokxh(\alpha)$ can be found in other anomaly detection related works, e.g., in \cite{Bohm_KDD2009} where it is referred as a {\em coding cost} of $\alpha$.

With the assumption of Gaussian sources, the optimal encoder (in the rate-distortion sense) lets survive only the components $j$ for which $\lambdaok_j > \theta$. Hence, $\fokxh$ and $\fkoxh$ given in \eqref{eq:Gfokxh} and \eqref{eq:Gfkoxh} have only the first $\nth$ components non-null with $n_\theta=\arg\max_j\{\lambdaok_j>\theta\}$. The other $n-\nth$ components are set to $0$ and thus cannot be used to tell anomalous from normal cases. We therefore focus on the first $\nth$ components of $\xhok$ and $\xhko$ which are Gassian with covariance matrices $\covokxh$ and $\covkoxh$ corresponding to the $\nth\times\nth$ upper-left submatrix of $\covok\Sth$ in \eqref{eq:Gfokxh} and of $\Sth\covok\Sth + \theta\Sth$ in \eqref{eq:Gfkoxh}, respectively. 

By properly combining the definition of $\zeta$ in \eqref{eq:z-ell} with the expression of $L$ within the Gaussian assumption in \eqref{eq:Gell}, we obtain
\begin{align}
\label{eq:Gzeta}
\zeta & = \frac{1}{2\ln 2} \tr\left[\covXth-I_{\nth}\right]
\end{align}
where $\covXth = (\covokxh)^{-1}\covkoxh$ which corresponds to the $\nth\times\nth$ upper-left submatrix of $(\covok)^{-1}\covko\Sth + \Tth$. Note that, since $\covXth$ is linear with respect to $\covkoxh$, so is $\zeta$. In addition, $\zeta$ vanishes when $\covokxh=\covkoxh$.

As a noteworthy particular case, when the normal signal is white, i.e., when $\covok=I_n$, we have that $\theta\in[0,1]$ and that for any $\theta<1$, $\Tth=\theta I_n$ and $\nth=n$. Hence, $\covXth=(1-\theta)\covko+\theta I_n$ that leads to $\zeta=0$. This result is not surprising since the distinguishability modelled by $|\zeta|$ depends only on the statistics of $\xok$ that has no exploitable structure.

\subsection{ Distinguishability in anomaly-aware detection}

When both $\fokxh$ and $\fkoxh$ are known, the anomaly detection task reduces to a binary classification problem for which we may resort to the Neyman Pearson Lemma \cite[Theorem 12.7.1]{Cover_1991}, \cite[Theorem 3.1]{Kay_1998}. This lemma can be understood in the sense that the cardinal quantity to observe is 
\begin{equation*}
r(\alpha) = \log_2\left[\frac{\fkoxh(\alpha)}{\fokxh(\alpha)}\right]
\end{equation*}
which can be interpreted as a measure of abnormality of $\alpha$, i.e., a score that the detector employs to distinguish whether the single $\alpha$ behaves normally or not. Consequently, one may measure the distinguishability between the distributions $\fokxh$ and $\fkoxh$ as the difference between the score observed in average when $\xh=\xhok$ and the score obtained in average when $\xh=\xhko$.
\begin{align}
\kappa 
=\;& \Exp\left[ r(\alpha) | \xhko \right]
  - \Exp\left[ r(\alpha) | \xhok \right] \label{eq:kappa-Exp}\\[0.5 em]
\begin{split}
=\;& \int_{\RR^n} \fkoxh(\alpha) \log_2\left[\frac{\fkoxh(\alpha)}{\fokxh(\alpha)}\right] \dd\alpha + \label{eq:kappa-integral}\\
&\;+ \int_{\RR^n} \fokxh(\alpha) \log_2\left[\frac{\fokxh(\alpha)}{\fkoxh(\alpha)}\right] \dd\alpha 
\end{split} \\[0.5 em]
\begin{split}
=\;& L\left(\xhko;\xhok\right) - L\left(\xhko;\xhko\right) + \\
&\;+ L\left(\xhok;\xhko\right) - L\left(\xhok;\xhok\right)
\end{split} \label{eq:kappa-L}\\[0.5 em]
=\;& \DKL{\fkoxh}{\fokxh} + \DKL{\fokxh}{\fkoxh} \label{eq:kappa-DKL}
\end{align}
where, given distributions $f'$ and $f''$, $\DKL{f'}{f''}$ refers to the Kullback-Leibler divergence \cite[Chapter 2]{Cover_1991}, of which $\kappa$ results to be the symmetrized version. 

The measure $\kappa$ models a detector that knows the distributions of both normal and anomalous sources such that their optimal codes are also known. From \eqref{eq:kappa-L}, it is evident that $\kappa$ may be interpreted as the sum of the differences in the average coding rate for both distorted sources with a code optimized for the normal source $L\left(\xhko;\xhok\right) - L\left(\xhok;\xhok\right)$ and optimized for the anomalous source $L\left(\xhok;\xhko\right) - L\left(\xhko;\xhko\right)$. Since the average coding rate is expected to be shorter when employed to code a source for which it is optimized, these differences are expected to be greater when the difference of two distributions $\fokxh$ and $\fkoxh$ increases. As a result, large $\kappa$ values correspond to system configurations with high detection capability.

Differently from $\zeta$, $\kappa$ is a quantity that is always positive and can be directly used as distinguishability measure.

Within the Gaussian assumption, the distiguishability measure $\kappa$ becomes
\begin{align}
\kappa 
&= \frac{1}{2\ln 2}
\tr\left[\covXth
+\covXth^{-1} - 2I_{\nth}\right]
\label{eq:Gkfn}
\end{align}
from which it is evident that $\kappa$ is convex with respect to $\covkoxh$ and, as for $\zeta$, $\kappa$ vanishes for $\covokxh=\covkoxh$.

As a final remark, coherently with the typical {\em per use} analysis, distinguishability measures implicitly consider detectors that scrutiny an increasing number of subsequent source instances and scale their performance by such a number. 
Hence, as rate and distortion coming from \eqref{eq:rate-distortion} are best-case bounds that can be approximated by increasing the complexity of the system, the distinguishability measures indicate how fast a detector accumulates information allowing to declare an anomaly. The higher such a figure, the lower the number of subsequent symbols needed to arrive at a conclusion or, alternatively, the higher the confidence in a conclusion drawn after analyzing as single instance.

\section{Average and large-window distinguishability}
\label{sec:averagelarge}

\subsection{Average on the set of possible anomalies}
\label{subs:possibleanomalies}

Anomalies modelled as zero-mean Gaussian vectors with fixed energy, are completely defined by their covariance matrix $\covko$ where $\tr(\covko)=n$. We decompose $\covko=\Uko\Lambdako{\Uko}^\top$ with $\Lambdako={\rm diag}(\lambdako_0,\dots,\lambdako_{n-1})$ and $\Uko$ orthonormal.

The set of all possible $\lambdako=(\lambdako_0,\dots,\lambdako_{n-1})^\top$ is

\[
\PP^n=\left\{\lambda\in{\RR^+}^n| \sum_{j=0}^{n-1}\lambda_j=n\right\}
\]

\noindent while the set of all possible $\Uko$ is that of orthonormal matrices

\[
\OO^n=\left\{
U\in\RR^{n\times n} | U^\top U=I_n\right\}
\]

By indicating with $\Unif{\cdot}$ the uniform distribution in the argument domain, we will assume that when $\lambdako$ is not known then $\lambdako\sim\Unif{\PP^n}$ and, similarly, when $\Uko$ is not known then $\Uko\sim\Unif{\OO^n}$, independently of $\lambdako$.

Note now that $\PP^n$ is invariant with respect to any permutation of the $\lambda_j$. Since $\lambdako\sim\Unif{\PP^n}$, also $\Exp[\lambdako]$ must be invariant with respect to the same permutations so that $\Exp[\lambdako_j]=\Exp[\lambdako_k]$ for any $j,k$. Since $\lambdaok$ has a constrained sum and is the diagonal of $\Lambdako$ we have $\Exp[\Lambdako]=I_n$.
This implies
\begin{equation}
\label{eq:avecovko}
\begin{split}
\Exp\left[\covko\right] 
&= \Exp\left[\Uko\Lambdako{\Uko}^\top\right] \\ 
&= \Exp\left[\Uko\Exp\left[\Lambdako\right]{\Uko}^\top\right] \\
&= \Exp\left[\Uko {\Uko}^\top\right] = I_n \\
\end{split}
\end{equation}

Hence, in our setting, the average anomaly is white and we may compute the corresponding distinguishability measures $\zeta_I$ and $\kappa_I$, i.e., $\zeta$ and $\kappa$ when $\covko = I_n$. Note that, in this case, $\covXth$ is the $\nth\times\nth$ upper-left submatrix of $(\covok)^{-1}\Sth+\Tth$, which is a diagonal matrix whose diagonal elements are 
\begin{equation*}
\uokthj = 
\frac{1}{\lambdaok_j}\left(1-\frac{\theta}{\lambdaok_j}\right)+
\frac{\theta}{\lambdaok_j}
\end{equation*}

With these quantities, the expressions of the distinguishability measures become

\begin{align}
\label{eq:Gzeta}
\zeta_I & = \frac{1}{2\ln 2}
\sum_{j=0}^{\nth-1} \left(\uokthj-1\right)\\
\label{eq:Gkappa}
\kappa_I & =
\frac{1}{2\ln 2}
\sum_{j=0}^{\nth-1}
\left(
\uokthj + \frac{1}{\uokthj} - 2\right)
\end{align}

Note that due to the Jensen's inequality, the linearity of $\zeta$ and the convexity of $\kappa$, we have $\zeta_I=\Exp[\zeta]$ and $\kappa_I\le\Exp[\kappa]$.

Moreover, the very simple structure of $\zeta_I$ allows the derivation of the following Theorem whose proof is in the appendix.

\begin{theorem}
	\label{th:Zvanishes}
	If $\kb=\arg \max_k\left\{\lambdaok_k\ge \lambdako_k =  1\right\}$, then $\zeta_I=0$ for at least one point $0<\theta<\lambdaok_{\kb}$
\end{theorem}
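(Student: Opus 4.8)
The plan is to treat $\zeta_I$ as a continuous function of the reverse water-filling parameter $\theta$ on $(0,\lambdaok_{\kb})$ and to exhibit a sign change, after which a zero follows by the intermediate value theorem. We may assume $\covok\neq I_n$ (otherwise $\zeta_I$ vanishes identically and any $\theta$ works) and that $\covok$ is nonsingular. The first ingredient is a rewriting of the summands of $\zeta_I=\frac{1}{2\ln 2}\sum_{j=0}^{\nth-1}(\uokthj-1)$: from $\uokthj=\frac{1}{\lambdaok_j}\bigl(1-\frac{\theta}{\lambdaok_j}\bigr)+\frac{\theta}{\lambdaok_j}$ an elementary factorization gives $\uokthj-1=-\frac{(\lambdaok_j-1)(\lambdaok_j-\theta)}{(\lambdaok_j)^2}$ for every surviving index $j<\nth$ (which has $\lambdaok_j>\theta>0$). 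In particular the $j$-th term is $0$ exactly at $\theta=\lambdaok_j$, which is precisely the value past which index $j$ leaves the sum; hence $\theta\mapsto\zeta_I(\theta)$ has no jumps at the breakpoints where $\nth$ decreases and is continuous on $(0,\lambdaok_{\kb})$.

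Next I would read off the behaviour at the two ends. As $\theta\to0^+$ all $n$ components survive, so $\zeta_I(\theta)\to-\frac{1}{2\ln 2}\bigl(n-\sum_{j=0}^{n-1}1/\lambdaok_j\bigr)$; since $\tr(\covok)=\sum_j\lambdaok_j=n$, the AM--HM inequality yields $\sum_j 1/\lambdaok_j\ge n$, so this limit is $\ge 0$, and it is strictly positive because the $\lambdaok_j$ are not all equal to $1$. At the other end, take $\theta$ in the interval $(\lambdaok_{\kb+1},\lambdaok_{\kb})$, which is nonempty because $\lambdaok_{\kb}\ge\lambdako_{\kb}=1>\lambdaok_{\kb+1}$ by the definition of $\kb$; there exactly the components $0,\dots,\kb$ survive, each obeying $\lambdaok_j\ge\lambdaok_{\kb}\ge 1$ and $\lambdaok_j>\theta$, so every summand $-\frac{(\lambdaok_j-1)(\lambdaok_j-\theta)}{(\lambdaok_j)^2}$ is $\le 0$ and $\zeta_I(\theta)\le 0$ throughout that interval. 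Being continuous on $(0,\lambdaok_{\kb})$, positive near $0$, and nonpositive on $(\lambdaok_{\kb+1},\lambdaok_{\kb})\subset(0,\lambdaok_{\kb})$, $\zeta_I$ must vanish at some $0<\theta<\lambdaok_{\kb}$.

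The step I expect to demand the most care is the bookkeeping of the surviving set: one must verify that $\nth=\kb+1$ on all of $(\lambdaok_{\kb+1},\lambdaok_{\kb})$ and, more importantly, that $\zeta_I$ does not jump when $\theta$ sweeps past an intermediate eigenvalue $\lambdaok_j$ and that index is dropped --- which is exactly where the factorization $\uokthj-1=-(\lambdaok_j-1)(\lambdaok_j-\theta)/(\lambdaok_j)^2$ pays off, since the dropped term is already $0$ at $\theta=\lambdaok_j$. The only genuinely analytic input is the sign of the $\theta\to0^+$ limit, which rests entirely on the AM--HM inequality together with the energy normalization $\tr(\covok)=n$.
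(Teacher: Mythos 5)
Your proof is correct and shares the paper's overall skeleton --- continuity of $\zeta_I$ across the breakpoints $\theta=\lambdaok_j$ (both arguments note that the departing term vanishes exactly there), nonnegativity as $\theta\to 0^+$ from $\sum_j\lambdaok_j=n$ via AM--HM, and an intermediate-value argument --- but you establish the negative sign by a genuinely different and more direct device. The paper computes $\partial\zeta_I/\partial\theta$, observes that all its summands are positive once $\theta>\lambdaok_{\kb}$, and combines this monotonicity with the endpoint value $\zeta_I(\lambdaok_0)=0$ to conclude $\zeta_I<0$ on $(\lambdaok_{\kb},\lambdaok_0)$, a region \emph{outside} the claimed interval, so that the crossing must already have occurred before $\lambdaok_{\kb}$. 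You instead factor each summand as $\uokthj-1=-(\lambdaok_j-1)(\lambdaok_j-\theta)/(\lambdaok_j)^2$ and read off nonpositivity term by term on $(\lambdaok_{\kb+1},\lambdaok_{\kb})$, which lies \emph{inside} $(0,\lambdaok_{\kb})$; the conclusion is then immediate, with no need for the derivative or the maximal-distortion endpoint. The factorization also makes the continuity bookkeeping transparent, correctly identifies $\nth=\kb+1$ on that interval, and works with repeated eigenvalues, whereas the paper assumes distinct eigenvalues ``for simplicity''; the side hypotheses you state explicitly (nonsingular $\covok$ and $\covok\neq I_n$, the latter giving strict positivity at $\theta\to0^+$ and the nonemptiness of $(\lambdaok_{\kb+1},\lambdaok_{\kb})$) are used implicitly by the paper as well.
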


Considering a white anomaly, the intuition behind this theorem is the following. 
When distortion is null (no compression), since $\xhko$ and $\xhok$ have the same average energy and the coding is tuned on $\xhok$, $L(\xhko, \xhok) > L(\xhok, \xhok)$ such that $\zeta_I$ is positive.
On the other hand, when distortion is so high that only the first component of $\xok$ survives, i.e., $\covokxh=\lambdaok_0-\theta$, a single component also survives in $\xhko$. In this setting, $\zeta_I$ depends on the difference between the two scalar quantities $\covokxh$ and $\covkoxh$. With few numerical manipulations, it is possible to prove that $\covokxh>\covkoxh$ thus $\zeta_I$ results to be negative.
Since $\zeta_I$ is continuous in $\theta$, it must pass through zero at least once. Therefore, at least one critical level of distortion exists that makes the detectors that do not use the information of the anomaly ineffective.

\subsection{Asymptotic distinguishability}

White signals are not only the average anomalies but are also {\em typical} anomalies in a sense specified by the following Theorem whose proof is in the appendix.

\begin{theorem}
	\label{th:covkoconc}
	If $\lambdako\sim\Unif{\PP^n}$ and $\Uko\sim\Unif{\OO^n}$ then, as $n\rightarrow\infty$, $\covko=\Uko{\rm diag}(\lambdako_0,\dots,\lambdako_{n-1}){\Uko}^\top$ tends to $I_n$ in probability.
\end{theorem}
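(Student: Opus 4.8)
\emph{Proof proposal.} The first thing to settle is the mode of convergence. The eigenvalues of $\covko$ are exactly the entries of $\lambdako$, and since $\lambdako_j=nE_j/\sum_k E_k$ for i.i.d.\ unit-mean exponentials $E_j$, the largest of them grows like $\log n$; hence $\|\covko-I_n\|_{\mathrm{op}}$ does \emph{not} vanish, and the statement has to be read entrywise, i.e.\ $\max_{0\le a,b\le n-1}\bigl|(\covko)_{ab}-\delta_{ab}\bigr|\to 0$ in probability (or, in its minimal useful form, convergence of each individual entry). The plan is to condition on $\Lambdako$, note that $\covko$ is then exactly centered at $I_n$, and bound its fluctuations.

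\emph{Centering.} Conditioning on $\Lambdako$ and using that $\Uko$ is Haar on $\OO^n$, independent of $\Lambdako$, together with $\Exp[\Uko A\Uko^\top]=\tfrac{\tr A}{n}I_n$ (valid for any fixed symmetric $A$), we get $\Exp[\covko\mid\Lambdako]=\tfrac{\tr\Lambdako}{n}I_n=I_n$, the last step because $\tr\Lambdako=\sum_j\lambdako_j=n$ on $\PP^n$; this is the conditional refinement of \eqref{eq:avecovko}, and it reduces the theorem to showing that $\covko-\Exp[\covko\mid\Lambdako]$ is small.

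\emph{Fluctuations.} Writing $(\covko)_{ab}=\sum_j\lambdako_j\Uko_{aj}\Uko_{bj}$ and inserting the low-order mixed moments of the entries of a Haar orthogonal matrix (for instance $\Exp[\Uko_{aj}^2]=1/n$, $\Exp[\Uko_{aj}^4]=3/[n(n+2)]$, $\Exp[\Uko_{aj}^2\Uko_{ak}^2]=1/[n(n+2)]$ for $j\ne k$, and their $a\ne b$ analogues), a direct computation in the spirit of the derivation of \eqref{eq:avecovko} gives, for every $a,b$,
\[
\Exp\!\left[\bigl((\covko)_{ab}-\delta_{ab}\bigr)^2\ \middle|\ \Lambdako\right]\ \le\ \frac{C}{n^{2}}\sum_{j=0}^{n-1}(\lambdako_j)^2
\]
for an absolute constant $C$. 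Since $\tfrac1n\sum_j(\lambdako_j)^2=\bigl(n\sum_jE_j^2\bigr)\big/\bigl(\sum_jE_j\bigr)^2\to 2$ in probability by the law of large numbers, the right-hand side is $O_p(1/n)$; Chebyshev conditionally on $\Lambdako$ (then integrating out) shows that each fixed entry of $\covko$ converges in probability to the corresponding entry of $I_n$, which already justifies the theorem in its per-entry reading.

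\emph{Uniformity over entries, and the main obstacle.} To upgrade this to $\max_{a,b}|(\covko)_{ab}-\delta_{ab}|\to 0$ in probability, the $L^2$ bound above is not enough: a union bound over the $n^2$ entries via Chebyshev loses exactly the factor $n$ just gained. The fix I would use is concentration of measure on the orthogonal group: conditionally on $\Lambdako$, the map $U\mapsto (U\Lambdako U^\top)_{ab}$ is $O(\|\Lambdako\|_{\mathrm{op}})$-Lipschitz in the Frobenius metric, hence sub-Gaussian with variance proxy $O(\|\Lambdako\|_{\mathrm{op}}^2/n)$; on the event $\|\Lambdako\|_{\mathrm{op}}\le C\log n$ (which has probability tending to $1$, again because the $\lambdako_j$ are normalized exponentials) a union bound over the $n^2$ entries yields $\max_{a,b}|(\covko)_{ab}-\delta_{ab}|=O_p\!\bigl((\log n)^{3/2}/\sqrt n\bigr)\to 0$. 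I expect this last passage to be the only genuinely delicate part: it requires a concentration input on $\OO^n$ and a simultaneous logarithmic bound on the operator norm of the independent random matrix $\Lambdako$, and marrying the two is what makes it fiddly; if only per-entry (equivalently, fixed-finite-marginal) convergence is needed — which is arguably all that the downstream use in the distinguishability measures $\zeta$ and $\kappa$ requires — the elementary steps above already close the proof.
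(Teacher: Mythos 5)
Your proposal is correct, but it takes a genuinely different route from the paper's proof. You condition on $\Lambdako$, use exact invariance ($\Exp[\Uko A {\Uko}^\top]=\frac{\tr A}{n}I_n$) to center $\covko$ at $I_n$, and then bound the conditional variance of each entry with the exact low-order Haar moments of $\Uko$, obtaining $\Exp\bigl[((\covko)_{ab}-\delta_{ab})^2\,\big|\,\Lambdako\bigr]\le \frac{C}{n^2}\sum_j(\lambdako_j)^2$; combined with the fact that $\frac1n\sum_j(\lambdako_j)^2\to 2$ (equivalently, the paper's exact value $\Exp[(\lambdako_j)^2]=\frac{2n}{n+1}$ from its simplex-integration lemma), conditional Chebyshev gives per-entry convergence in probability. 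The paper instead splits each entry into $N_n$ blocks of $M_n=o(n/\log n)$ columns, invokes Jiang's theorem on the asymptotic Gaussianity of $o(n/\log n)$ columns of a Haar matrix, applies a central limit theorem for weakly dependent triangular arrays (Neumann) to each block to get asymptotic normality with vanishing variance, and closes with union bounds over blocks and ${\rm erfc}$ tails. Your argument is shorter and more elementary: it needs no distributional approximation of Haar columns and no CLT, only second/fourth Haar moments, and it correctly pins down the mode of convergence (entrywise rather than in operator norm, since $\max_j\lambdako_j\sim\log n$), which matches the $\Delta_2,\Delta_\infty$ metrics the paper actually plots; your concentration-of-measure upgrade to uniformity over all $n^2$ entries is a genuine strengthening of what either elementary argument gives directly. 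What the paper's heavier machinery buys is an explicit asymptotic Gaussian description of the entry fluctuations with quantitative ${\rm erfc}$ tail bounds; note that those tails decay super-polynomially in $n$, so the uniform (max-over-entries) statement would also follow from the paper's bounds by a union bound over entries—your log-Sobolev/Lipschitz step, with the $\|\Lambdako\|_{\rm op}=O(\log n)$ event, is a clean alternative rather than the only way to close that gap. The only points to tighten in your write-up are routine: state the Haar fourth-moment identities you use (e.g.\ $\Exp[U_{aj}^4]=\frac{3}{n(n+2)}$, $\Exp[U_{aj}^2U_{bj}^2]=\Exp[U_{aj}^2U_{ak}^2]=\frac{1}{n(n+2)}$, and the negative cross term $\Exp[U_{aj}U_{bj}U_{ak}U_{bk}]=-\frac{1}{(n-1)n(n+2)}$, which you may simply drop since it only helps), and, if you pursue the concentration step, handle the two components of $\OO^n$ by conditioning on the determinant so that the sub-Gaussian inequality on $SO(n)$ applies.
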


Hence, when $n$ increases, most of the possible anomalies behave as white signals, i.e., $\zeta$ tends to $\zeta_I$, that thus enjoys the property shown in Theorem~\ref{th:Zvanishes}. From an anomaly detection perspective, if the signal is characterized by a sufficiently large dimension $n$, the designer may consider the white anomaly as a reference.

\section{Numerical examples}
\label{sec:numerical}

In this section we match the theoretical derivations with the quantitative assessment of the performance of some practical anomaly detectors applied to compressed signals.

Normal signals are assumed to be $\xok\sim\Gauss{0}{\covok}$ where $\covok$ is the diagonal matrix of the eigendecomposition of the matrix $\cov = U \covok U^\top$ , with $\cov_{j, k}=\omega^{|j-k|}$, for $j, k = 0,\dots,n-1$ and $U$ an orthonormal matrix.
The parameter $\omega$ is set to yield a different degree of non-whiteness measured with the so-called {\em localization} defined as
\begin{equation*}
\Loc_{\xok} 
= \frac{\tr({\covok}^2)}{\tr^2(\covok)}-\frac{1}{n}
\end{equation*}

The localization goes from $\Loc_{\xok}=0$ when the signal is white to $\Loc_{\xok}=1-\frac{1}{n}$ when all the energy is concentrated along a single direction of the signal space (see \cite{Mangia_TCSI2017} for more details). To show the effect of realistic localization \cite{Cambareri_ISCAS2013} we consider values of $\omega$ corresponding to $\Loc_{\xok}\in\{0, 0.05, 0.2\}$.

Anomalous signals are generated as $\xko\sim\Gauss{0}{\covko}$, where $\covko=\Uko\Lambdako{\Uko}^\top$ is randomly picked according to the uniform distribution defined in Section~\ref{subs:possibleanomalies}.

To generate $\lambdako\sim\Unif{\PP^n}$, we follow \cite{Onn_AOR2011} to first draw
$\xi_j\sim\Unif{[0,1]}$ for $j=0,\dots,n-1$ and then set
\[
\lambdako_j=\frac{\log \xi_j}{\sum_{k=0}^{n-1}\log\xi_k}
\]

To generate $\Uko\sim\Unif{\OO^n}$, we follow \cite{Mezzadri_NAMS2011} and start by generating a matrix $A$ within the Ginibre ensemble \cite{Ginibre_JMP1965}, i.e., with independent entries $A_{j,k}\sim\Gauss{0}{1}$ for $j,k=0,\dots,n-1$.
We then set $\Uko$ to the orthonormal factor of the $QR$-decomposition of $A$.

A first use of this random sampling is the possibility of pairing Theorem~\ref{th:covkoconc} with some numerical evidence. Fig.~\ref{fig:covkoconc} reports the vanishing trend of the average squared and uniform deviation from $I_n$ of a population of uniformly distributed covariance matrices $\covko$. Though not a theoretical result, note that empirical evidence supports a classical $\nicefrac{1}{\sqrt{n}}$ convergence. 

\begin{figure}
    \centering
    \includegraphics{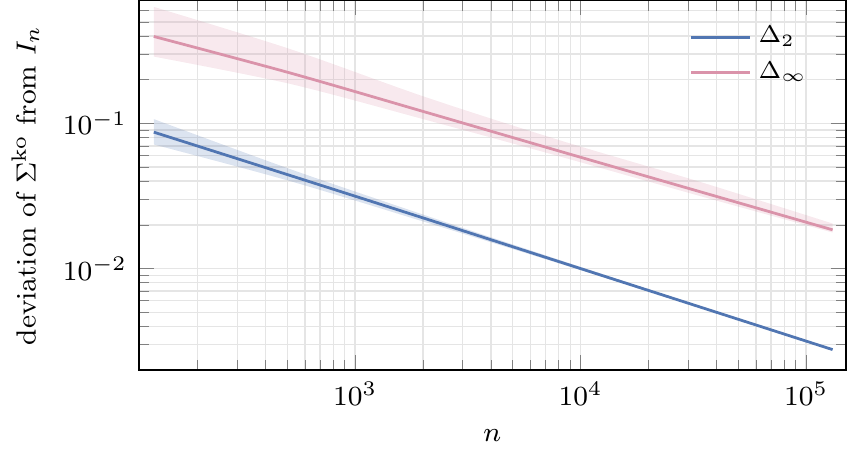}
    \caption{\label{fig:covkoconc}Trend of $\Delta_2=\frac{1}{n}\sqrt{\sum_{j,k=0}^{n-1}\left[\covko_{j,k}-(I_n)_{j,k}\right]^2}$ and $\Delta_\infty=\max_{j,k} \left|\covko_{j,k}-(I_n)_{j,k}\right|$ when $n$ increases. Solid lines are mean trends while shaded areas contain $98\%$ of the population.}
\end{figure}

As far as detector assessment is concerned, we decide to set $n=32$ and consider three compression techniques tuned to the normal signal and applied to both normal and anomalous instances. More specifically, $x$ is mapped to $\xh$ by

\begin{itemize}
    \item the minimum-rate-given-distortion compression in \eqref{eq:Gfenc} (Rate-Distortion Compression {\tt RDC});
    \item projecting $x$  along the subspace spanned by the eigenvectors of $\covok$ with the largest eigenvalues (Principal Component Compression {\tt PCC}); 
    \item a family of autoencoders \cite[Chapter 14]{Goodfellow_2016} with an increasingly deficient latent representation (Auto-Encoder Compression {\tt AEC}).
    Assuming that $p$ is the dimensionality of the representation, the encoder is a neural network with fully connected layers of dimensions $n$, $4n$, $2n$, $p$, and the decoder is the dual network whose layers have dimensions $2n$, $4n$, and $n$ and the number of inputs is $p$. The family of autoencoders is trained to minimize distortion computed as in \eqref{eq:distortiondef}. To smooth performance degradation we first train an autoencoder with $p=n-1$.
    Then, the node of the latent representation along which we measure the least average energy is dropped to produce a smaller network with an $(p-1)$-dimensional latent space. The obtained network is re-trained using the previous weights as initialization. This process is repeated decreasing $p$ and thus considering larger distortion values.
\end{itemize}

These three schemes address in a different way the trade-off between compression and distortion.
Since we refer to a theoretical model based on continuous quantities and for which rate is potentially infinite, the compressors have to be  paired with a quantization stage ensuring that rate values are finite. In particular we encode each component of $\xh$ with $16$ bits and this yields rates of less than $16n=512$ bits per time step. We assume that quantization is fine enough to substantially preserve the Gaussian distribution of $\xh$ and thus evaluate the mutual information between $x$ and $\xh$ as if they were jointly gaussian with a covariance matrix that we estimate by Monte Carlo simulation \cite{Arellano_SJS2013}.
Such estimation yields the rate-distortion curves in Fig.~\ref{fig:rate_distortion}.

\begin{figure}
    \centering
    \includegraphics[width=\columnwidth]{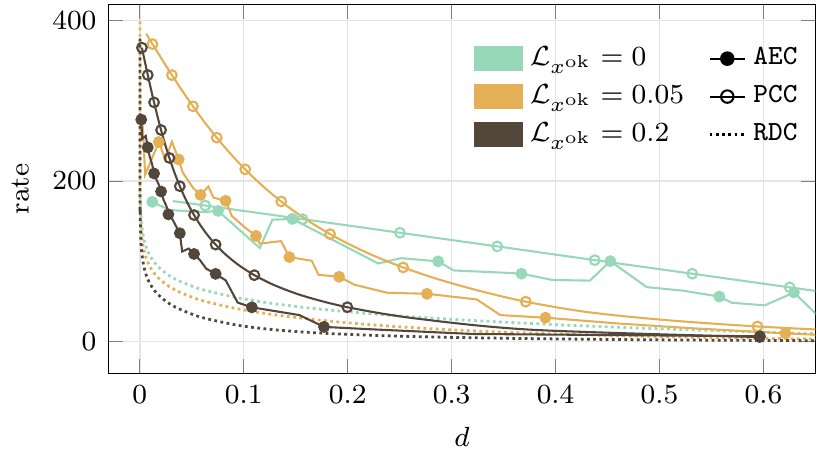}
    \caption{\label{fig:rate_distortion}Rate distortion curves for the three compression schemes we consider and for different value of the localization of the original signal.}
\end{figure}

As expected, {\tt RDC} yields the smallest rates while {\tt PCC} gives the largest ones. Between the two we have {\tt AEC}, whose performance depends on the effectiveness of the training.

Note that, only the results of Fig.~\ref{fig:rate_distortion} refer to the additional quantization stage, while in the remaining part of our analysis we consider continuous sources.

The compressed version of the signal is then passed to a detector whose task is to compute a score such that high-score instances should be more likely to be anomalous. The final binary decision is taken by matching the score against a threshold.

We consider two detectors not relying on information of the anomaly
\begin{itemize}
    \item a Likelihood Detector ({\tt LD}) whose score is the same considered for $\zeta$, so that to each instance $x$ we associate the score $\ellxhok(\xh) = -\log \fokxh(\xh)$;
    \item a One-Class Support-Vector Machine ({\tt OCSVM}) \cite{Scholkopf_NIPS1999} with a Gaussian kernel\footnote{The signal components are normalized by their variance and the scale parameter of the Gaussian kernel is fixed to $1/n_\theta$.}, trained on a set of instances of normal signals contaminated by 1\% of unlabelled white instances to help the algorithm in finding the envelope of normal instances.
\end{itemize}

We also consider two detectors that are able to leverage information on the anomaly
\begin{itemize}
\item a Neyman-Pearson Detector ({\tt NPD}), whose score is the same considered for $\kappa$, so that to instance $x$ we associate the score $r(\xh) = \log\fkoxh(\xh)-\log\fokxh(\xh)$;
\item a Deep Neural Network ({\tt DNN}) with three fully connected hidden layers with $p$, $2n$, $n$ neurons with ReLu activations and a final sigmoid neuron producing the score. The network is trained\footnote{Training of {\tt DNN} involves the backpropagation algorithm with ADAM optimizer \cite{Kingma_ICLR2015}, a batch-size of $20$ instances, and an initial learning rate of $0.01$ that is scaled by $0.2$ any $5$ epochs for which the validation loss is not decreasing where the validation set consists of the 10\% of the instances initially devoted to the training. These training parameters are the result of a tuning.} with a binary cross-entropy loss against a dataset containing labelled normal and anomalous instances.
\end{itemize}

{\tt LD} and {\tt NPD} detectors can be employed only on signals compressed by {\tt RDC} or by {\tt PCC} method since they rely on the statistical characterization of the signals that is not available after the nonlinear processing in {\tt AEC}.

\begin{table}
    \centering
    \caption{\label{tab:samples}Number of anomalies ($\covko$) and, for each anomaly, the number of normal ($\ok$) and anomalous ($\ko$) signal instances used in the training and assessment of the detectors.}
    \begin{tabular}{rrrrrrrr}
    \toprule
        \multirow{ 3}{*}{detector} &
        \multicolumn{3}{c}{training} & &
        \multicolumn{3}{c}{assessment}\\
        \cmidrule{2-4}
        \cmidrule{6-8}
                   &
        \multicolumn{1}{c}{\#$\covko$} & \multicolumn{2}{c}{\#instances$\times\covko$} & &
        \multicolumn{1}{c}{\#$\covko$} & \multicolumn{2}{c}{\#instances$\times\covko$} \\
        & & \multicolumn{1}{c}{$\ok$} & \multicolumn{1}{c}{$\ko$} & 
        & & \multicolumn{1}{c}{$\ok$} & \multicolumn{1}{c}{$\ko$}\\
    \midrule
         LD &
         & & & &
         $10^3$ & $10^3$ & $10^3$\\
         OCSVM &
         1 & $99\!\times\!\!10^3$ & $10^3$ & & $10^3$ & $10^3$ & $10^3$\\
         NPD & 
         & & & &
         $10^3$ & $10^3$ & $10^3$\\
         DNN & 
         50 & $10^5$ & $10^5$ & &
         50 & $10^5$ & $10^5$\\
    \bottomrule
    \end{tabular}
\end{table}

\begin{figure*}
    \centering
    \includegraphics{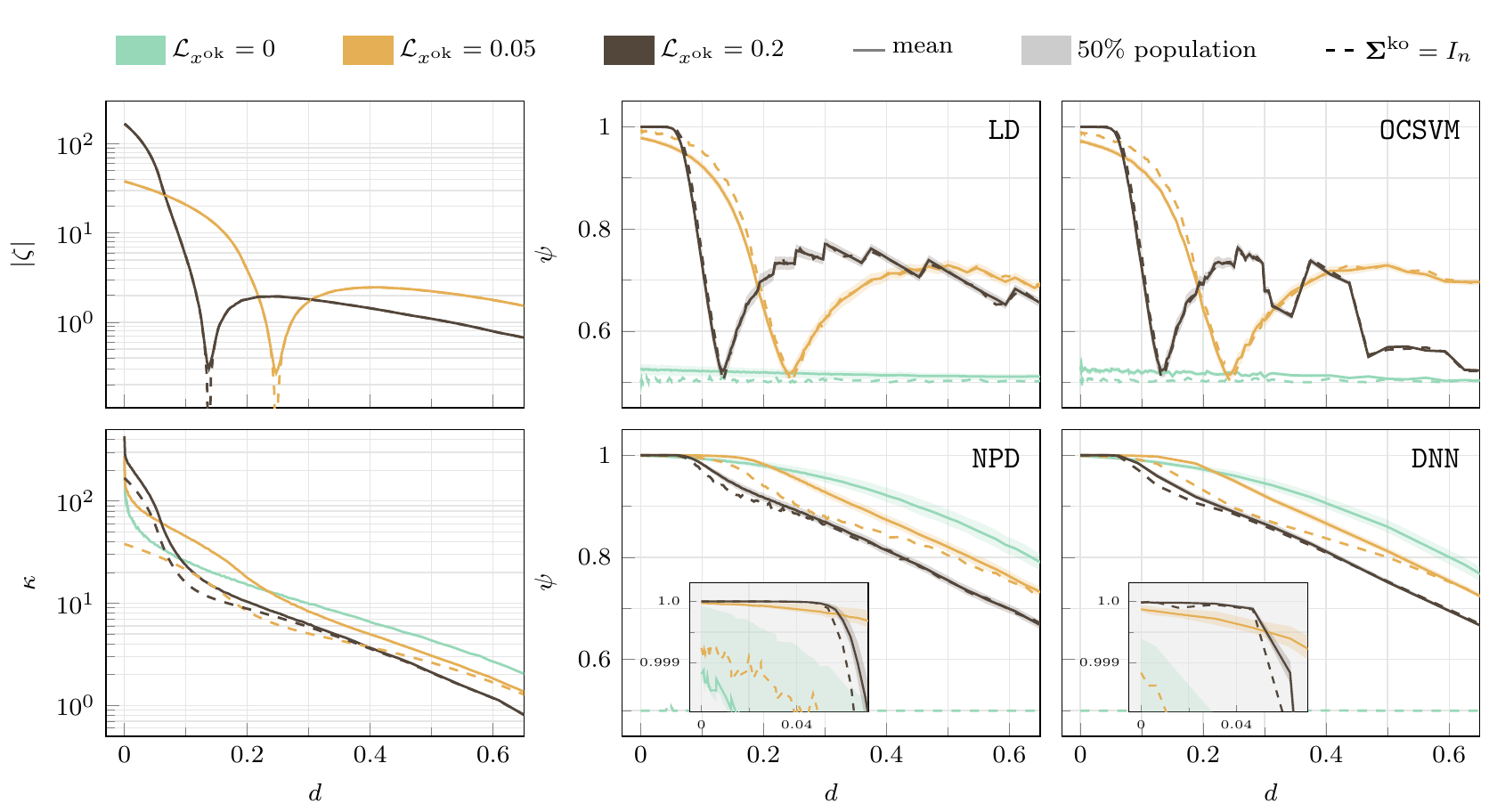}
    \caption{\label{fig:opt_compressor} Distinguishability measures $\zeta$, $\kappa$ and $\psi$ against normalized distortion $d$ in case of {\tt RDC}.}
\end{figure*}
\begin{figure*}
    \centering
    \includegraphics{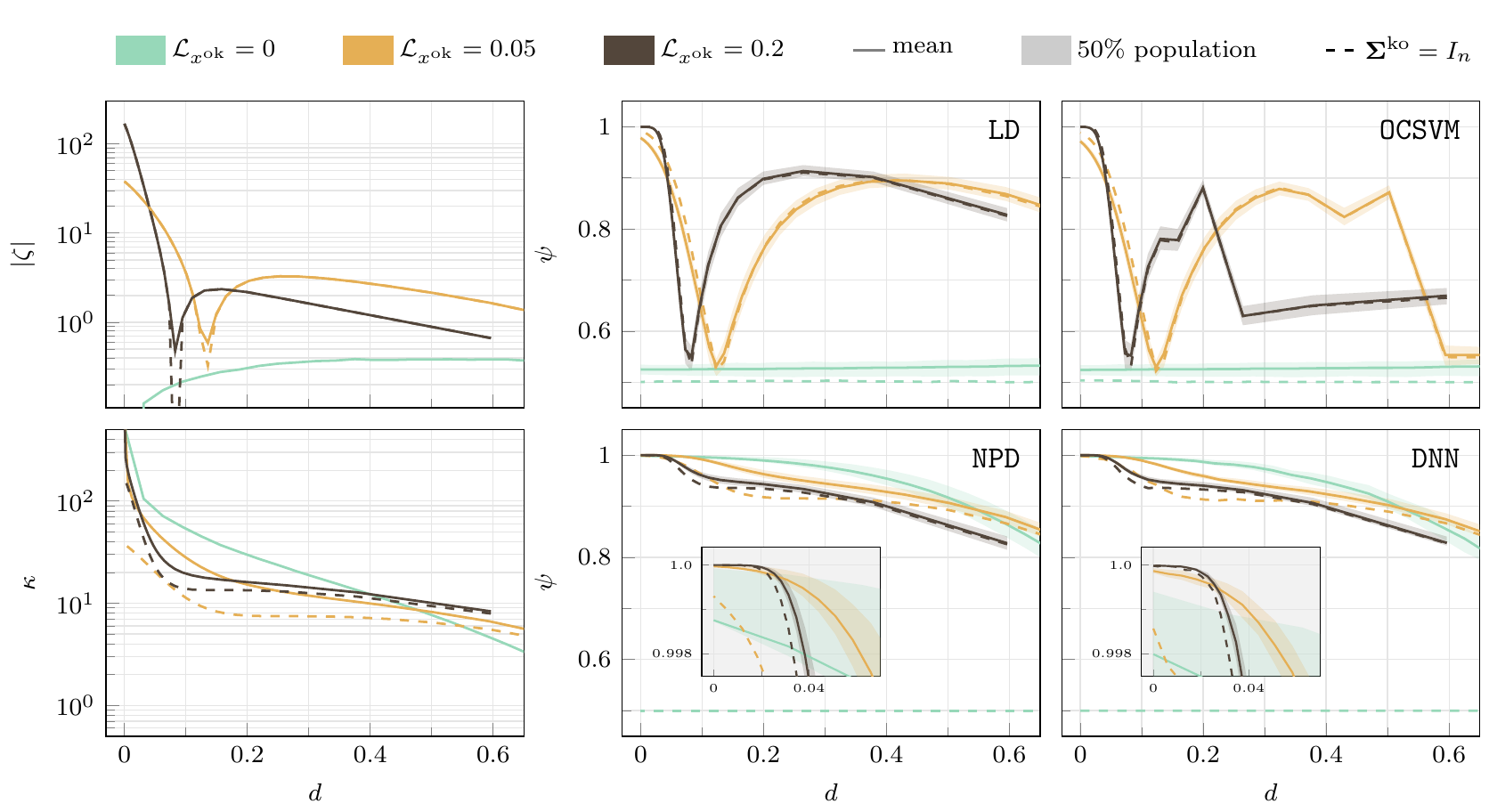}
    \caption{\label{fig:pca_compressor} Distinguishability measures $\zeta$, $\kappa$ and $\psi$ against normalized distortion $d$ in case of {\tt PCC}.}
\end{figure*}

Table \ref{tab:samples} shows how many different anomalies and how many signal instances are generated for the training (when needed) and for the assessment of the detectors. Note that in the DNN case we limited the analysis to $50$ anomalies since the training process must be repeated for each of them.

To be independent of the choice of thresholds, detectors' performance is assessed by the Area-Under-the-Curve ($\AUC$) methodology \cite{Fawcett_PATREC2006}. $\AUC$ estimates the probability that given a random normal instance and a random anomalous instance, the former has a lower score with respect to the latter, as it should be in an ideal setting. Hence, $\AUC$ is a positive performance index.

Clearly, detectors with $\AUC=\frac{1}{2}$ are no better than coin tossing. Yet, if $\AUC<\frac{1}{2}$, the score has some ability to distinguishing normal and anomalous signals if it is interpreted in a reverse way. Hence, it is convenient to set our {\em empirical distinguishability} measure to
\begin{equation*}
\psi = \begin{cases}
\AUC & \text{if $\AUC \ge \frac{1}{2}$}\\
1 - \AUC & \text{if $\AUC < \frac{1}{2}$}\\
\end{cases}
\end{equation*}

Note that, if $\AUC$ must be estimated from samples, reversing values lower than $\frac{1}{2}$ is not always possible. There are classes of estimators for which values less than $\frac{1}{2}$ are not reliable \cite{Jamalabadi_HBM2016, Snoek_NeuroImage2019}. From now on, we report results referring to $\AUC$ estimated as in \cite{Fawcett_PATREC2006} for which reversing values lower than $\frac{1}{2}$ is possible.

In the following, the trends of $\psi$ are reported and matched with the trends of $|\zeta|$ and $\kappa$ to show how theoretical properties reflect on real cases.
Comparisons must be partially qualitative as $\zeta$ and $\kappa$ quantify the distinguishability with bits per symbol while $\psi$ comes from the probability of correct detection. Note also that $\zeta$ and $\kappa$ refer to the difference between the average values of the score in the normal and anomalous cases, while $\psi$ takes into account the entire distributions of these scores.

All plots are made against a normalized distortion $d = D/n$ in the range $d\in\left[0,0.64\right]$ as larger relative distortions are usually beyond operative ranges.

\subsection{\tt RDC}

Fig.~\ref{fig:opt_compressor} summarizes the results we have in this case with two rows of 3 plots each.
The upper row of plots corresponds to detectors that do not exploit information on the anomaly, while the lower row of plots concerns detectors that may leverage information on the anomaly.
Colors correspond to different $\Loc_{\xok}$, dashed trends assume that the anomaly is the average one, i.e., white, and shaded areas show the span of $50\%$ of the Monte Carlo population. 
The profiles of $|\zeta|$ and $\kappa$ on the left shall be matched with the profiles on the right that correspond to the four detectors we consider. No $|\zeta|$ profile appears for $\Loc_x=0$ as in that case $\zeta=0$.

Numerical results confirm that $\zeta_I = \Exp[\zeta]$, $\kappa_I \le \Exp[\kappa]$ as discussed in Section~\ref{subs:possibleanomalies}. This corroborates the role of the white anomaly as a reference case since it represents the average behaviour in case of anomaly-agnostic detector or a lower bound in the anomaly-aware scenarios. The white anomaly is not only a reference case but also the case to which any possible anomaly tends when $n$ increases as demonstrated in Theorem~\ref{th:covkoconc}.

Theory also anticipates that without any knowledge of the anomaly (upper row), a limited amount of distortion may cause distinguishability to vanish and thus detectors to fail. This happens for practical detectors such as LD and OCSVM. The distortion level at which detectors fail is also anticipated by $|\zeta|$ and depends on $\Loc_{\xok}$ as predicted by  Theorem~\ref{th:Zvanishes}.
Overall, theoretical measures $|\zeta|$ and $\kappa$ anticipate that in the low-distortion region, more localized signals are more distinguishable from anomaly though they cause detector failures at smaller distortions with respect to less localized signals.

Detectors leveraging the knowledge of the anomaly (lower row) fail completely only at the maximum level of distortion as revealed by the abstract distinguishability measure $\kappa$. 
Also in this case, by comparing the trend of $\kappa$ with the zoomed areas in the NPD and DNN plots we see how theoretical measures anticipates that in the low-distortion region more localized signals tend to be more distinguishable from anomalies but cause a more definite performance degradation of detectors when $d$ increases.

\subsection{\tt PCC}

From the point of view of the rate-distortion trade-off {\tt PCC} is largely suboptimal. Yet, due to its linear nature, $x$ and $\xh$ are still jointly Gaussian, so that, also in this case, we can compute the theoretical $|\zeta|$ and $\kappa$ by means of \eqref{eq:Gzeta} and \eqref{eq:Gkfn}.

Fig.~\ref{fig:pca_compressor} summarizes the results we have in this case with plots of the same kind of Fig.~\ref{fig:opt_compressor}. The qualitative behaviours commented in the previous subsection appear in the new plots and are anticipated by the trends of the theoretical quantities.

The distortion levels at which anomaly-agnostic detectors fail change with respect to the {\tt RDC} case but are still anticipated by the theoretical curves and Theorem~\ref{th:Zvanishes}.

In this case, the values of $|\zeta|$ beyond breakdown distortion levels increase slightly more that in the optimal compression scenario. Hence, by adopting a compression strategy that is suboptimal in the rate-distortion sense one may obtain a better distinguishability of the compressed normal signal from the compressed anomalies.
This is, indeed, what happens in practice as highlighted by the LD and OCSVM plots in the first row of Fig.~\ref{fig:pca_compressor}.

\subsection{\tt AEC}

In this case, compression is non-linear so that $x$ and $\xh$ may not be jointly Gaussian. This prevents us from computing the theoretical curves $|\zeta|$ and $\kappa$ and from applying {\tt LD} and {\tt NPD} that rely on the knowledge of the distribution of the signals. For this reason, Fig.~\ref{fig:ae_compressor} reports only the performance of OCSVM and DNN detectors.

Notice how the qualitative trends of those performances still follow, though with a larger level of approximation, what is indicated by the theoretical curves for {\tt PCC}. 

\begin{figure}
    \centering
    \includegraphics{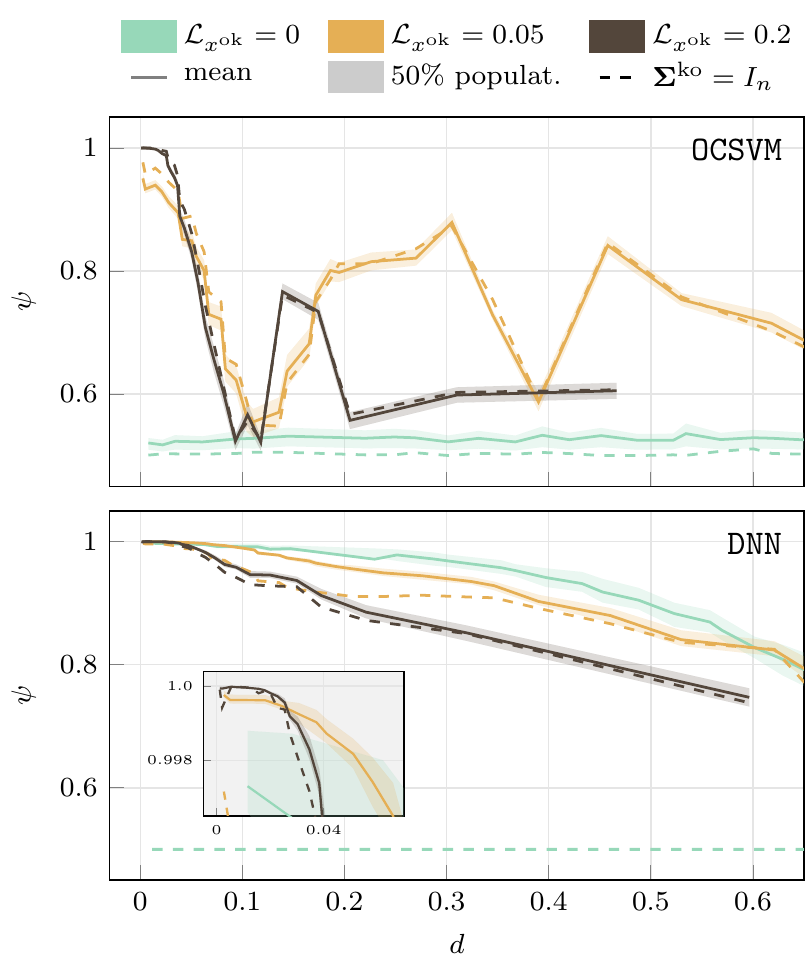}
    \caption{\label{fig:ae_compressor} Distinguishability measure $\psi$ against normalized distortion $d$ in case of {\tt AEC}.}
\end{figure}

\section{Conclusion}

Massive sensing systems may rely on lossy compression to reduce the bitrate needed to transmit acquisitions to the cloud while theoretically maintaining the important information.
At some intermediate point along their path to centralized servers, compressed sensor readings may be processed for early detection of anomalies in the systems under observation.
Such detection must be performed on compressed data.

To measure detection performance we define two information theory metrics referring to the anomaly-agnostic and anomaly-aware cases, for which a statistical interpretation is also provided.

In a framework approximating normal and anomalous signals with Gaussian sources, we revise the classical rate-distortion theory to report the distributions of the distorted signals, the mapping to obtain them (see Lemma~\ref{lem:Gfenc} and Lemma~\ref{lem:Gfkoxh}), and closed forms for the distinguishability metrics.

Focusing on the anomaly-agnostic case, we prove with Theorem~\ref{th:Zvanishes}, and confirm with numerical evidences, that there exists at least one critical level of distortion for which the detector is ineffective.

We also prove that the white anomaly is a reference case that can be employed in the design of the system. Indeed, it provides information about the average and minimum performance in the anomaly-agnostic and anomaly-aware scenarios, respectively. Moreover, we demonstrate with Theorem~\ref{th:covkoconc} that any possible anomaly tends to be white in the asymptotic case. 

All these results are confirmed with numerical examples in a toy case. We show that the theoretical measure of distinguishability anticipate the performances of real detectors in case of both optimal (in the rate-distortion sense) and sub-optimal compressors.
 
\section*{Appendix}

\begin{proof}[Proof of Lemma \ref{lem:Gfenc}]

Distortion is tuned to the normal case that entails a memoryless sources. Hence we may drop time indications and concentrate on a vector $x$ with independent components $x_j\sim\Gauss{0}{\lambda_j}$ for $j=0,\dots,n-1$.

We know from \cite{Kolmogorov_TIT1956} that for a given value of the parameter $\theta$, each component $x_j$ is transformed separately into $\xh_j$. In particular,
\begin{equation}
    \xh_j = \begin{cases} 
    0 & \text{if $\lambda_j\le\theta$} \\
    x_j+\Delta_j & \text{if $\lambda_j>\theta$}
    \end{cases}
\end{equation}
where, to achieve the Shannon lower bound, $\Delta_j$ must be an instance of a Gaussian random variable independent of $\xh_j$. Hence, the three quantities $\xh_j$, $x_j$ and $\Delta_j$ must be such that $\left(\xh_j,x,\Delta_j\right)^\top\sim\Gauss{0}{\cov_{\xh_j,x_j,\Delta_j}}$ with

\begin{equation}
\label{eq:covall}
\cov_{\xh_j,x_j,\Delta_j}=
\begin{pmatrix}
\lambda_j-\theta & \lambda_j-\theta & 0 \\
\lambda_j-\theta & \lambda_j & -\theta\\
0 & -\theta & \theta
\end{pmatrix}
\end{equation}

That explains in which sense $\xh_j$ {\em encodes} $x_j$. In fact, the non-diagonal elements $\lambda_j-\theta$ are positive and thus $\xh_j$ and $x_j$ are positively correlated.

From \eqref{eq:covall}, if we agree to identify a Gaussian with $0$ variance with a Dirac's delta we infer that $\xh_j\sim\Gauss{0}{\max\{0,\lambda_j-\theta\}}$ and thus $\xh\sim\Gauss{0}{\cov\Sth}$. 

Moreover, $\left(\xh_j, x_j\right)^\top \sim \Gauss{0}{\cov_{\xh_j,x_j}}$ with $\cov_{\xh_j, x_j}$ the upper-left $2\times 2$ submatrix of $\cov_{\xh_j,x_j,\Delta_j}$ in \eqref{eq:covall}. If we assume that $\theta<\lambda_j$, from the joint probability of $x_j$ and $\xh_j$, we may compute the action of $\fxhgx$ on the $j$-th component of $x_j$ as the PDF of $\xh_j$ given $x_j$, i.e.,
\begin{align*}
f_{\xh_j|x_j}(\alpha,\beta) 
&= \frac{f_{\xh_j,x_j}(\alpha,\beta)}{f_{x_j}(\beta)}
= \frac{
\GPDF{\begin{matrix}\alpha\\\beta\end{matrix}}{0}{\cov_{\xh_j,x_j}}
}{
\GPDF{\beta}{0}{\lambda_j}
}\\
&= \frac{1}{\sqrt{2\pi\lambda_j\tau_j s_j}}
\exp\left({-\frac{1}{2}
\frac{\left[\alpha - s_j\beta\right]^2}{\lambda_j\tau_j s_j}}\right)
\end{align*}
where  $\tau_j=\min\{1,\theta/\lambda_j\} \in [0,1]$, and $s_j=1-\tau_j$. Note that, $f_{\xh_j|x_j}$ becomes $\delta(\alpha)$ for $\tau_j\rightarrow 1$ (maximum distortion of this component implies that the corresponding output is set to $0$) and $\delta(\alpha-\beta)$ for $\tau_j\rightarrow 0$ (no distortion of this component, the output is equal to the input).

We may collect the component-wise PDFs into a vector PDF by using the matrix $\Tth={\rm diag}\left(\tau_0,\dots,\tau_{n-1}\right)=\min\{I_n,\theta(\covok)^{-1}\}$, and the matrix $\Sth=I_n-\Tth$ thus yielding the thesis.
\end{proof}

\begin{proof}[Proof of Lemma \ref{lem:Gfkoxh}]

The PDF of $\xhko$ distorted by means of $\fenc$ can be computed as

\[
\fkoxh(\alpha)=\int_{\RR^n}\fkoxhjx(\alpha,\beta)\dd\beta=\int_{\RR^n}\fenc(\alpha,\beta)\fkox(\beta)\dd\beta
\]

Assume first to be in the low-distortion condition $\theta<\lambdaok_{n-1}$ that implies $\Tth=\theta(\covok)^{-1}$, and write

\begin{eqnarray*}
\lefteqn{\fkoxh(\alpha)=
\int_{\RR^n}
\GPDF{\alpha}{\Sth\beta}{\covok\Sth\Tth}
\GPDF{\beta}{0}{\covko}\dd \beta}\\
&=&
\GPDF{\alpha}{0}{\covok\Sth\Tth}\times\\
&&
\int_{\RR^n}
e^{-\frac{1}{2}\left[\beta^\top\Sth (\covok\Sth\Tth)^{-1}\Sth\beta-2
\alpha(\covok\Sth\Tth)^{-1}\Sth\beta\right]}\times\\
&&\hspace{50mm}
\GPDF{\beta}{0}{\covko}\dd \beta
\\
&=&
\GPDF{\alpha}{0}{\covok\Sth\Tth}\times\\
&&
\frac{1}{\sqrt{(2\pi)^{n} \det \covko}}
\underbrace{
	\int_{\RR^n}
	e^{-\frac{1}{2}\left(
		\beta^\top Q \beta-
		2 q^\top \beta
		\right)
	}\dd\beta}_{g(\alpha)}\hspace{5mm}
\end{eqnarray*}

\noindent with $Q=\Sth(\covok\Sth\Tth)^{-1}\Sth+(\covko)^{-1}=(\theta I_n)^{-1} -(\covok)^{-1}+(\covko)^{-1}$ and $q=(\covok\Sth\Tth)^{-1}\Sth\alpha=\alpha/\theta$.
To compute $g(\alpha)$ let $Q=UDU^\top$ with $D$ diagonal and $U$ orthonormal, and
set $\beta'=D^{\nicefrac{1}{2}}U^\top \beta$ so that $\beta=UD^{-\nicefrac{1}{2}}\beta'$ and $\dd \beta=\nicefrac{\dd \beta'}{\sqrt{\det Q}}$.
With this write

\[
g(\alpha)=
\frac{1}{\sqrt{\det Q}}
 \int_{\RR^n}e^{-\frac{1}{2}\left(\beta'^\top \beta'-2q^\top U D^{-\nicefrac{1}{2}}\beta'\right)}\dd\beta'
\]

\noindent at the exponent of which one may add and subtract $q^\top Q^{-1} q=q^\top UD^{-\nicefrac{1}{2}} D^{-\nicefrac{1}{2}}U^\top q$ to yield

\begin{align*}
g(\alpha)&=
\frac{1}{\sqrt{\det Q}}
\int_{\RR^n}e^{-\frac{1}{2}
\left(
\left\|	
\beta'-	
D^{-\nicefrac{1}{2}}U^\top q
\right\|^2
-q^\top Q^{-1} q
\right)}\dd\beta'\\
&=
\sqrt{\frac{(2\pi)^n}{\det Q}}
e^{\frac{1}{2}q^\top Q^{-1} q}
\end{align*}

Putting this back into $\fokxh$ we get

\[
\fkoxh(\alpha)=
\GPDF{\alpha}{0}
{
\left[(\theta I_n)^{-1} -(\covok)^{-1}+(\covko)^{-1}\right]\covko\covok\Sth\Tth
}
\]

A straightforward expansion of the definitions under the low-distortion assumption finally rearranges the covariance matrix into 

\begin{eqnarray}
\nonumber
\lefteqn{\left[(\theta I)^{-1} -(\covok)^{-1}+(\covko)^{-1}\right]\covko\covok\Sth\Tth=}\\
\nonumber
&=&
\left[(\theta I_n)^{-1} -(\covok)^{-1}+(\covko)^{-1}\right]\covko\covok\theta(\covok)^{-1}\Sth
\\
\nonumber
&=&
\left[\covko-\theta(\covok)^{-1}\covko+\theta I_n\right]\Sth\\
\nonumber
&=&
\left[I_n-\theta(\covok)^{-1}\right]\covko\Sth+\theta\Sth\\
\label{eq:covkoxh}
&=&\Sth\covko\Sth+\theta\Sth
\end{eqnarray}

\noindent as in the statement of the Lemma.

To address the case in which $\theta$ exceeds $\lambdaok_{n-1}$ note that for 
$\theta\rightarrow (\lambdaok_{n-1})^-$, the last diagonal entry of $\Sth$ tends to $0$ and thus by \eqref{eq:covkoxh} the covariance tends to have zeros in its last row and column. Since a Gaussian with vanishing-variance can be considered Dirac's delta, this model the fact that the last component of both $x$ and $\xko$ is fully distorted and set to $0$.
With this, \eqref{eq:covkoxh}, is valid also for $\lambdaok_{n-1}<\theta<\lambdaok_{n-2}$.
Yet, analogous considerations can be carried out for $\theta\rightarrow(\lambdaok_j)^-$ and $j=n-2,n-3,\dots,0$ so that \eqref{eq:covkoxh} is valid for any value of $\theta$.
\end{proof}

\begin{proof}[Proof of Lemma \ref{lem:Gell}]

\begin{eqnarray*}
\lefteqn{L(x';x'')= -\int_{\RR^n}\GPDF{\alpha}{0}{\cov'}\log_2\left[\GPDF{\alpha}{0}{\cov''}\right]\dd\alpha}\\
&=&
\frac{1}{2}\log_2\left[(2\pi)^n\left| \cov''\right|\right]\int_{\RR^n}\GPDF{\alpha}{0}{\cov'}\dd\alpha\\
&&
\hspace{15mm}+\frac{1}{2\ln 2}\int_{\RR^n}\alpha^\top(\cov'')^{-1}\alpha\,\, \GPDF{\alpha}{0}{\cov'}\dd\alpha\\
&=&
\frac{1}{2}\log_2\left[(2\pi)^n\left|\cov''\right|\right]
+\frac{1}{2\ln 2}\tr\left[(\cov'')^{-1}\cov'\right]
\end{eqnarray*}

Where the last summand has been computed as the expectation of a quadratic form in a Gaussian multivariate for which Corollary 3.2b.1 in \cite[chapter 3]{Provost_1992} gives a formula.	

\end{proof}

\begin{proof}[Proof of Theorem \ref{th:Zvanishes}]

From \eqref{eq:Gzeta} we have that
\begin{equation*}
\zeta_I = \frac{1}{2\ln 2}\sum_{j=0}^{\nth-1}\alpha_j(\theta)
\end{equation*}
with
\begin{equation*}
\alpha_j(\theta) = \frac{1}{\lambdaok_j} \left(1 - \frac{\theta}{\lambdaok_j}\right) + \frac{\theta}{\lambdaok_j} - 1
\end{equation*}

Note that $\alpha_j(\theta)$ is continuous and its derivative is $\frac{\partial}{\partial \theta}\alpha_j = (1-\nicefrac{1}{\lambdaok_j})/\lambdaok_j$.

For simplicity's sake assume $\lambdaok_0>\lambdaok_1>\dots>\lambdaok_{n-1}>0$, set $\lambdaok_n=0$, and define $\Theta_j=]\lambdaok_{j+1},\lambdaok_{j}[$ for $j=0,\dots,n-1$ so that if $\theta\in\Theta_j$ then $\nth=j+1$.

As a function of $\theta$, $\zeta_I$ is continuous.
In fact, it is trivially continuous in each $\Theta_j$. Yet, it is continuous also at any chosen  $\lambdaok_{\jb}$ with $\jb=0,\dots,n-1$. To see why, note that
\begin{align*}
\lim_{\theta\rightarrow{\lambdaok_{\jb}}^-} \zeta_I
& =
\frac{1}{2\ln 2} \lim_{\theta\rightarrow{\lambdaok_{\jb}}^-}
\sum_{j=0}^{\jb}\alpha_j(\theta)\\
& =
\frac{1}{2\ln 2} \lim_{\theta\rightarrow{\lambdaok_{\jb}}^-}
\alpha_{\jb}(\theta)+
\sum_{j=0}^{\jb-1}\alpha_j(\theta)\\
& =
\frac{1}{2\ln 2} \lim_{\theta\rightarrow{\lambdaok_{\jb}}^+}\sum_{j=0}^{\jb-1}\alpha_j(\theta)
=
\lim_{\theta\rightarrow{\lambdaok_{\jb}}^+} \zeta_I
\end{align*}
where we have exploited that the $\alpha_j(\theta)$ are continuous and thus their left and right limits coincide, and that $\alpha_{\jb}(\lambdaok_{\jb})=0$.

On the left-hand side of its domain, When $\theta=\lambdaok_n=0$ (no distortion), we have $\nth=n$ and thus
\begin{equation*}
\zeta_I = \frac{1}{2\ln 2} \sum_{j=0}^{n-1} \left(\frac{1}{\lambdaok_j} - 1 \right) \ge 0
\end{equation*}
\noindent where the last inequality follows from the fact that $\sum_{j=0}^n\lambdaok_j=n$ and thus $\sum_{j=0}^n\nicefrac{1}{\lambdaok_j}\ge n$.

On the right-hand side of its domain, when $\theta=\lambdaok_0$ (maximum distortion), we have $\nth=0$ and thus $\zeta_I=0$.
Yet, we also have that
\begin{equation*}
\frac{\partial}{\partial\theta} \zeta_I = \frac{1}{2\ln 2}
\sum_{j=0}^{\nth-1}
\frac{1}{\lambdaok_j} \left(1 - \frac{1}{\lambdaok_j}\right)
\end{equation*}
\noindent in which the summands are positive if $\lambdaok_j>1$. Hence, if $\kb=\arg\max_k\{\lambdaok_k\ge 1\}$, for $\theta\ge\lambdaok_{\kb}$, all the summands in the above expression  are positive and 
thus $\frac{\partial}{\partial\theta}\zeta_I > 0$ for $\lambdaok_{\kb}<\theta \le \lambdaok_0$. 
Given that $\zeta_I=0$ at the end of that interval, it must be negative in its interior.

Since we know that $\zeta_I$ is positive for $\theta=\lambdaok_n=0$ and it is continuous for $\theta\in]\lambdaok_n,\lambdaok_0[$, it must pass through zero at least once whenever it is not negative, i.e., for $0<\theta<\lambdaok_{\kb}$.
\end{proof}

\begin{proof}[Proof of Theorem~\ref{th:covkoconc}]

We will use the following Lemma whose proof follows this one.

\begin{lemma}
\label{lem:lambdaExp}
If $\lambdako\sim\Unif{\PP^n}$, then
for any integrable function $f:\RR\rightarrow\RR$ and any $j=0,\dots,n-1$

\[
\Exp\left[f(\lambdako_j)\right]=\frac{n-1}{n^{n-1}}\int_0^n f(p)(n-p)^{n-2}\dd p
\]
\end{lemma}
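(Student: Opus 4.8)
The plan is to identify the one-dimensional marginal law of a single coordinate of a uniform point on the simplex $\PP^n$, and then read off the expectation formula. By the permutation invariance of $\PP^n$ noted in Section~\ref{subs:possibleanomalies}, it suffices to treat $j=0$, i.e.\ to show that $\lambdako_0$ has density $\frac{n-1}{n^{n-1}}(n-p)^{n-2}$ on $[0,n]$; the stated identity is then just the definition of expectation. The strategy is to reduce $\Unif{\PP^n}$, which lives on an $(n-1)$-dimensional hyperplane, to an ordinary Lebesgue volume computation in $\RR^{n-1}$.

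First I would make the domain full-dimensional by forgetting one coordinate. The restriction to the hyperplane $\{\sum_j\lambda_j=n\}$ of the projection $(\lambda_0,\dots,\lambda_{n-1})\mapsto(\lambda_0,\dots,\lambda_{n-2})$ is an affine bijection with constant Jacobian onto the Euclidean simplex
\[
\Delta=\Bigl\{(\lambda_0,\dots,\lambda_{n-2})\in{\RR^+}^{n-1}\ \Big|\ \sum_{i=0}^{n-2}\lambda_i\le n\Bigr\},
\]
the dropped coordinate being recovered as $\lambda_{n-1}=n-\sum_{i=0}^{n-2}\lambda_i\ge0$. Because the Jacobian is constant, $\Unif{\PP^n}$ pushes forward under this projection to $\Unif{\Delta}$, so that for integrable $f$ one has $\Exp[f(\lambdako_0)]=\mathrm{vol}(\Delta)^{-1}\int_\Delta f(\lambda_0)\,\dd\lambda_0\cdots\dd\lambda_{n-2}$.

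Next I would slice $\Delta$ at $\lambda_0=p$: the fibre $\{(\lambda_1,\dots,\lambda_{n-2})\in{\RR^+}^{n-2}:\sum_{i\ge1}\lambda_i\le n-p\}$ is a scaled standard $(n-2)$-simplex of volume $(n-p)^{n-2}/(n-2)!$, and integrating this over $p\in[0,n]$ returns $\mathrm{vol}(\Delta)=n^{n-1}/(n-1)!$ — both being instances of the elementary fact $\mathrm{vol}\{x\in{\RR^+}^k:\sum_i x_i\le a\}=a^k/k!$, proved by a one-line induction on $k$. Substituting into Fubini's theorem,
\[
\Exp[f(\lambdako_0)]=\frac{(n-1)!}{n^{n-1}}\int_0^n f(p)\,\frac{(n-p)^{n-2}}{(n-2)!}\,\dd p=\frac{n-1}{n^{n-1}}\int_0^n f(p)(n-p)^{n-2}\,\dd p,
\]
which is the claim.

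I do not expect a genuine obstacle here; the only points that need care are (i) the constant-Jacobian observation, which is what guarantees that forgetting one coordinate turns the uniform law on the hyperplane slice into the uniform law on $\Delta$, and (ii) keeping the two simplex-volume normalizations consistent. As an independent cross-check, $\Unif{\PP^n}$ is — up to the rescaling $\sum_j\lambda_j=n$ instead of $1$ — the symmetric Dirichlet law with all parameters equal to $1$, whose one-dimensional marginal on the standard simplex is $\mathrm{Beta}(1,n-1)$ with density $(n-1)(1-t)^{n-2}$; the substitution $t=p/n$ turns this into $\frac{n-1}{n^{n-1}}(n-p)^{n-2}$ on $[0,n]$, again giving the lemma.
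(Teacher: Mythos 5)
Your proof is correct and takes essentially the same route as the paper's: both drop one coordinate to reduce $\Unif{\PP^n}$ to a Lebesgue computation over the projected simplex, obtain the fibre volume $(n-p)^{n-2}/(n-2)!$ by slicing, and normalize by the total volume $n^{n-1}/(n-1)!$ (the paper does this via explicit iterated integrals and the normalizing constant $1/\II[1]$). Your explicit remark on the constant Jacobian of the projection and the Dirichlet/Beta cross-check are nice additions, but they do not change the argument.
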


From \cite{Jiang_AOP2006} we know that if $Q$ is uniformly distributed in $\OO^n$ (i.e., if it is distributed according to the Haar measure on the orthogonal group) then, for any sequence of integers $M_n<n$ increasing with $n$ but such that $M_n=o\left(\frac{n}{\log n}\right)$, the entries of the first $M_n$ columns of $Q$ converge in probability to independent random variables such that $\sqrt{n}Q_{j,k}\sim\Gauss{0}{1}$ for $j=0,\dots,n-1$, $k=0,\dots,M_n-1$.

Such a property can be extended to any subset of $M_n$ columns. In fact, given any subset of $M_n$ columns of $Q$, there is a permutation matrix $P$ such that $QP$ has such columns as the first ones.

Yet, since $P\in\OO^n$ and $Q$ is distributed according to the Haar measure in that group, also $QP$ is distributed according to that measure and the entries in those columns tend to independent Gaussians.

Divide now $n$ by $M_n$ as in $n=N_n M_n+m_n$, where $N_n$ is the quotient and $0\le m_n<M_n$ is the remainder. We can look at $Q$ as the concatenation of $N_n$ matrices $Q_i$, for $i=0,\dots,N_n-1$, each $n\times M_n$, and of a last matrix $Q_{N_n}$ that is $n\times m_n$. 
From $M_n=o\left(\frac{n}{\log n}\right)$, we have $\lim_{n\rightarrow \infty} N_n/\log n = \infty$ and we can choose $M_n$ such that $N_n=o(n^\alpha )$ for any $\alpha>0$.

If we set $\Uko=Q$ then, $\covko=\Uko\Lambdako{\Uko}^\top$ can be written componentwise

\begin{eqnarray*}
\lefteqn{\covok_{j,k}=\sum_{l=0}^{n-1}\Uko_{j,l}\lambdako_l\Uko_{k,l}=}\\
&=&
\sum_{i=0}^{N_n-1}
\underbrace{
\sum_{l=iM_n}^{(i+1)M_n-1}\frac{1}{n}\nu_{j,l}\lambdako_l \nu_{k,l}}_{W_i}+
\underbrace{
\sum_{l=n-m_n}^{n-1}\frac{1}{n}\nu_{j,l}\lambdako_l \nu_{k,l}}_{W_{N_n}}
\end{eqnarray*}

\noindent $i=0,\dots,N_n-1$ scans the first submatrices $Q_i$, the last summand accounts for the {\em remainder} matrix $Q_{N_n}$, and, thanks to the above considerations, $\nu_{j,l}\sim \nu_{k,l}\sim\Gauss{0}{1}$ for all $j,k,l$.

We now have to address that case $j\neq k$ and the case $j=k$ separately.

\subsection{Asymptotics of $\covko_{j,k}$ for $j\neq k$}

Let us now consider $W_0$ as the representative of all other $W_i$ for $i=0,\dots,N_n-1$, written as $W_0=\sum_{l=0}^{M_n-1}X_{n,l}$ with $X_{n,l}=\frac{1}{n}\nu_{j,l}\lambdaok_l \nu_{k,l}$. All the normal random variables involved in such a sum are asymptotically independent but this is not true for the $\lambdaok_l$ since the eigenvalues are constrained to sum to $n$.

Hence, $W_0$ is a triangular array of row-dependent random variables whose asymptotic behaviour can be analyzed by means of \cite[Theorem 2.1]{Neumann_PS2013} that is essentially a Lindeberg-Feller Central Limit Theorem with the row-wise independence relaxed to asymptotic row-wise incorrelation. To analyze the asymptotics of $W_0$ we note that

\[
\Exp[X_{n,l}]=\Exp[\lambdako_l]\Exp[\nu_{j,l}]\Exp[\nu_{k,l}]=0	
\]

Note also that though not independent, the covariance and the correlation between $X_{n,l'}$ and $X_{n,l''}$ is 

\begin{eqnarray*}
\lefteqn{\Exp\left[X_{n,l'}X_{n,l''}\right]=}\\
&=&\frac{1}{n^2}
\Exp\left[\lambdako_{l'}\lambdako_{l''}\right]\Exp\left[\nu_{j,l'}\nu_{j,l''}\right]\Exp\left[\nu_{k,l'}\nu_{k,l''}\right]\\
&=&\frac{1}{n^2}\begin{cases}
0 & \text{if $l'\neq l''$}\\
\Exp[(\lambdako_l)^2] & \text{if $l'=l''=l$}
\end{cases}
\end{eqnarray*}

With this we also know that

\begin{eqnarray*}
\lefteqn{\sigma_{W_0}^2=
\Exp\left[W_i^2\right]=\sum_{l'=0}^{M_n-1}\sum_{l''=0}^{M_n-1}\Exp[X_{n,l'}X_{n,l''}]=}\\
&=&\sum_{l=0}^{M_n-1}\Exp[X_{n,l}^2]=\frac{1}{n^2}\sum_{l=0}^{M_n-1}\Exp[(\lambdako_l)^2]
\end{eqnarray*}

To compute the last expectation we may resort to Lemma \ref{lem:lambdaExp} that gives

\[
\Exp[(\lambdako_l)^2]=
\frac{n-1}{n^{n-1}}\int_0^n p^2(n-p)^{n-2}\dd p=\frac{2n}{n+1}
\]

Hence we have $\sigma^2_{W_0}=\frac{2M_n}{n(n+1)}\rightarrow 0$ for $n\rightarrow\infty$.

This helps satisfying the Lindeberg condition since, if 
for a given $\epsilon>0$ we indicate with
$\Exp\left[X_{n,l}^2\big|_{|X_{n,l}|\ge \epsilon}\right]$ the expectation of $X_{n,l}^2$ restricted to its values that are not less than $\epsilon$ in modulus, then

\[
\sum_{l=0}^{n-1}
\Exp\left[X_{n,l}^2\big|_{|X_{n,l}|\ge \epsilon}\right]\le
\sum_{l=0}^{n-1}
\Exp\left[
X_{n,l}^2
\right]=\sigma^2_{W_0}=\frac{2M_n}{n(n+1)}
\] 

\noindent that vanishes asymptotically.

Finally, let $L_{M_n}, R_{M_n}\subset\{0,\dots,M_n-1\}$ be two index subsets such that $L_{M_n}\cap R_{M_n}=\emptyset$. If $g_{L_{M_n}}$ is any function of the random variables $X_{n,l}$ with $l\in L_{M_n}$ and $h_{R_{M_n}}=\prod_{l\in R_n}X_{n,l}$, the covariance between $g_{L_{M_n}}$ and $h_{R_{M_n}}$ is 

\begin{eqnarray*}
\lefteqn{\Exp\left[g_{L_{M_n}}h_{R_{M_n}}\right]=
\Exp\left[
g_{L_m}
\prod_{l\in R_n}
\frac{1}{n}
\lambdako_l\nu_{j,l}\nu_{k,l}
\right]}\\
&=&\Exp\left[
g_{L_m}
\prod_{l\in R_n}
\lambdako_l
\right]
\prod_{l\in R_n}
\frac{1}{n}
\Exp\left[
\nu_{j,l}
\right]
\Exp\left[
\nu_{k,l}
\right]
=0
\end{eqnarray*}

\noindent that is enough to satisfy the assumptions in equation (2.3) and in equation (2.4) of \cite[Theorem 2.1]{Neumann_PS2013}. From that Theorem, we finally now that $W_0\sim\Gauss{0}{\frac{2M_n}{n(n+1)}}$ when $n\rightarrow\infty$ where convergence is in probability and thus also in distribution. Clearly, the same happens to any $W_i$ for $i=0,\dots,N_n-1$, while $W_{N_n}\sim\Gauss{0}{\frac{2m_n}{n(n+1)}}$.

Let us now consider

\begin{eqnarray*}
\lefteqn{\Pr\left\{\left|\covko_{j,k}\right|\le\epsilon\right\}=}\\
&=& \Pr\left\{\left|\sum_{i=0}^{N_n-1}W_i+W_{N_n}\right|\le\epsilon\right\}\\
&\ge&\Pr\left\{\sum_{i=0}^{N_n-1}\left|W_i\right|+\left|W_{N_n}\right|\le\epsilon\right\}\\
&\ge& \Pr\left\{\max\left\{\left|W_0\right|,\dots,\left|W_{N_n}\right|\right\}\le\frac{\epsilon}{ N_n + 1}\right\}\\
&=& 1-\Pr\left\{\max\left\{\left|W_0\right|,\dots,\left|W_{N_n}\right|\right\}>\frac{\epsilon}{ N_n + 1}\right\}
\end{eqnarray*}

Yet

\begin{eqnarray*}
\lefteqn{\Pr\left\{\max\left\{\left|W_0\right|,\dots,\left|W_{N_n}\right|\right\}>\frac{\epsilon}{ N_n + 1}\right\}}\\
&=&\Pr\left\{\left|W_0\right|>\frac{\epsilon}{ N_n + 1}\,\vee\,\dots\,\vee\,\left|W_{N_n}\right|>\frac{\epsilon}{ N_n + 1}\right\}\\
&\le& (N_n + 1)\Pr\left\{\left|W_0\right|>\frac{\epsilon}{ N_n + 1}\right\}
\end{eqnarray*}

\noindent so that

\[
\Pr\left\{\left|\covko_{j,k}\right|\le\epsilon\right\}\ge
1-(N_n + 1)\Pr\left\{\left|W_0\right|>\frac{\epsilon}{ N_n + 1}\right\}
\]

From this and from the asymptotic normality of $W_0$ we may say

\begin{equation}
\label{eq:jneqk}
\Pr\left\{\left|\covko_{j,k}\right|\le\epsilon\right\}\ge
1-\frac{ N_n + 1}{2}{\rm erfc}\left(\frac{\epsilon}{2\sqrt{\frac{(N_n + 1)^2 M_n}{n(n+1)}}}\right)
\end{equation}

\noindent in which the probability tends to $1$  for $n\rightarrow\infty$, and $M_n=o\left(\frac{n}{\log n}\right)$ chosen to have $N_n=o\left(n^{\nicefrac{1}{2}}\right)$.

\subsection{Asymptotics of $\covko_{j,k}$ for $j=k$}

In this case, exploiting the fact that $\sum_{j=0}^{n-1}\lambdako_j=n$, we may write

\begin{eqnarray*}
\lefteqn{\covok_{j,j}=
1+\sum_{i=0}^{N_n-1}
\underbrace{
\sum_{l=iM_n}^{(i+1)M_n-1}\frac{1}{n}\left(\nu^2_{j,l}-1\right)\lambdako_l}_{W_i}+}\\
&&
\hspace{30mm}
\underbrace{
\sum_{l=n-m_n}^{n-1}\frac{1}{n}\left(\nu^2_{j,l}-1\right)\lambdako_l}_{W_{N_n}}
\end{eqnarray*}

\noindent in which the summands $W_i$ are a triangular arrays of elements with features similar to the previous ones.

In fact, we may focus on $W_0=\sum_{l=0}^{M_n-1}X_{n,l}$ with 
$X_{n,l}=\frac{1}{n}\lambdako_l \left(\nu_{j,l}^2-1\right)$ and note that $\Exp[X_{n,l}]=0$ and 

\begin{eqnarray*}
\lefteqn{\Exp\left[X_{n,l'}X_{n,l''}\right]=}\\
&=&\frac{1}{n^2}
\Exp\left[\lambdako_{l'}\lambdako_{l''}\right]\Exp\left[\left(\nu_{j,l'}^2-1\right)\left(\nu_{j,l''}^2-1\right)\right]\\
&=&
\frac{2}{n^2}
\begin{cases}
0 & \text{if $l'\neq l''$}\\
\Exp[(\lambdako_l)^2] & \text{if $l'=l''=l$}
\end{cases}
\end{eqnarray*}

\noindent where we have exploited tha fact that $\Exp[(\nu_{j,l}^2-1)^2]=\Exp[\nu_{j,l}^4]-2\Exp[\nu_{j,l}^2]+1=2$. With this, $\sigma^2_{W_0}=\frac{4M_n}{n(n+1)}\rightarrow 0$ for $n\rightarrow\infty$.

As before, this makes the Lindeberg condition automatically satisfied and is also enough to satisfy the covariance constraints in equations (2.3) and (2.4) of \cite[Theorem 2.1]{Neumann_PS2013} from which we get that $W_0\sim\Gauss{0}{\frac{4M_n}{n(n+1)}}$ for $n\rightarrow\infty$ where convergence is in probability thus also in distribution. An analogous path leads to the asymptotic behaviour $W_{N_n}\sim\Gauss{0}{\frac{4m_n}{n(n+1)}}$.

The same inequalities as before lead to 

\begin{equation}
\label{eq:jeqk}
\Pr\left\{\left|\covko_{j,j}-1\right|\le\epsilon\right\}\ge
1-\frac{ N_n + 1}{2}{\rm erfc}\left(\frac{\epsilon}{4\sqrt{\frac{ (N_n + 1)^2 M_n}{n(n+1)}}}\right)
\end{equation}

\noindent in which the probability tends to $1$  for $n\rightarrow\infty$, and $M_n=o\left(\frac{n}{\log n}\right)$ chosen to have $N_n=o\left(n^{\nicefrac{1}{2}}\right)$.

From \eqref{eq:jneqk} and \eqref{eq:jeqk} we finally get that $\Uko$ tends to $I_n$ in probability as $n\rightarrow\infty$.
\end{proof}

\begin{proof}[Proof of Lemma \ref{lem:lambdaExp}]

For any function $f:\RR\mapsto\RR$ we have

\begin{eqnarray*}
\lefteqn{\II[f(p)]=
	\int_{\PP^n} f(p_0)\dd p_0\dots\dd p_{n-1}}\\
	&=&
	\int_0^n f(p_0)
	\int_0^{n-p_0}
	\int_0^{n-p_0-p_1}
	\!\!\!\!\!\!
	\dots
	\int_0^{n-p_0-p_1-\dots-p_{n-3}}
	\!\!\!\!\!\!\!\!\!\!\!\!\!\!\!\!\!\!\!\!\!\!
	\dd p_0\dots\dd p_{n-2}\\
	&=&
	\int_0^n f(p_0)\frac{(n-p_0)^{n-2}}{(n-2)!}\dd p_0
\end{eqnarray*}

Since $\lambdako$ is uniformly distributed over $\PP^n$ the probability density is the constant $1/\II[1]=n^{-(n-1)}(n-1)!$ and the expectation of $f$ is

\begin{eqnarray*}
\lefteqn{\Exp[f(\lambdaok_j)]=n^{-(n-1)}(n-1)!\II[f(p)]}\\
&=& \frac{(n-1)!}{n^{n-1}}\int_0^n f(p)\frac{(n-p)^{n-2}}{(n-2)!}\dd p\\
&=&
\frac{n-1}{n^{n-1}}\int_0^n f(p)(n-p)^{n-2}\dd p
\end{eqnarray*}
\end{proof}

\bibliographystyle{IEEEtran}
\bibliography{reference}

\end{document}